\documentclass[a4paper, accepted=2023-10-24, onecolumn, 11pt]{quantumarticle}
\pdfoutput=1
\usepackage[utf8]{inputenc}
\usepackage[english]{babel}
\usepackage[T1]{fontenc}
\usepackage[numbers]{natbib}
\usepackage{listings}

\usepackage[resetlabels]{multibib}

\newcites{article}{article references}
\newcites{book}{book references}
\newcites{misc}{misc references}
\newcites{repo}{repository references}
\newcites{web}{website references}
\newcites{other}{Other references}

\usepackage{amsmath, amsfonts, amssymb, bbm, mathtools, verbatim, parskip, xcolor,amsthm}
\usepackage{hyperref}
\usepackage{enumerate}
\usepackage{paralist}
\usepackage[framemethod=tikz]{mdframed}
\usepackage{graphicx}
\usepackage{tikz-cd}
\usepackage{soul, xcolor}
\usepackage[ruled, linesnumbered]{algorithm2e}
\usepackage[margin=1in]{geometry}
\usepackage{amsfonts}
\usepackage{imakeidx}
\newtheorem{thm}{Theorem}
\newtheorem{claim}{Claim}
\newtheorem{fact}[thm]{Fact}
\newcounter{example}[section]
\newenvironment{example}[1][]{\refstepcounter{example}\par\medskip
   \noindent \textbf{Example~\theexample. #1} \rmfamily}{\medskip}
\newtheorem{cor}[thm]{Corollary}

\newtheorem{definition}{Definition}[section]
\newtheorem{theorem}{Theorem}
\newtheorem{lemma}{Lemma}

 \newenvironment{changemargin}[2]{%
\begin{list}{}{%
\setlength{\topsep}{0pt}%
\setlength{\leftmargin}{#1}%
\setlength{\rightmargin}{#2}%
\setlength{\listparindent}{\parindent}%
\setlength{\itemindent}{\parindent}%
\setlength{\parsep}{\parskip}%
}%
\item[]}{\end{list}}

\newtheorem{observation}{Observation}

\newcommand\supp{\mathop{\rm supp}}

\newcommand{\clE}{\mathcal{E}}
\newcommand{\cIF}{\mathcal{F}}

\newcommand{\cH}{\mathcal{H}}

\newcommand{\clO}{\mathcal{O}}

\newcommand{\sfV}{\mathsf{Val}}

\newcommand{\Id}{\mathbbm{1}}
\newcommand{\id}{\mathbb{I}}

\newcommand{\bbC}{\mathbb{C}}
\newcommand{\fix}{\mathsf{Fix}}

\newcommand{\tab}{\hspace{2cm}}

\newcommand{\eps}{\epsilon}
\newcommand{\poly}{\mathrm{poly}}

\newcommand{\hmin}{\mathrm{H}_{\min}}

\newcommand{\ket}[1]{|#1\rangle}
\newcommand{\bra}[1]{\langle#1|}
\newcommand{\ketbra}[2]{|#1\rangle\langle#2|}

\newcommand{\enc}{\mathsf{Enc}}
\newcommand{\dec}{\mathsf{Dec}}

\newcommand{\Tr}{\mathrm{Tr}}

\newcommand{\suppress}[1]{}

\newcommand{\wg}{\mathrm{Wg}}
\allowdisplaybreaks

\usepackage{crossreftools}

\makeatletter
\newcommand{\optionaldesc}[2]{%
  \phantomsection
  #1\protected@edef\@currentlabel{#1}\label{#2}%
}
\makeatother

\DeclareMathOperator{\E}{\mathbf{E}}
\begin{document}
\title{Tamper Detection Against Unitary Operators}

\author{Naresh Goud Boddu}
\affiliation{NTT Research, Sunnyvale, USA}
\author{Upendra Kapshikar}
\affiliation{Center for Quantum Technologies, National University of Singapore, Singapore}

\maketitle

	\begin{abstract}
 Security of a storage device against a tampering adversary has been a well-studied topic in classical cryptography.
 Such models give black-box access to an adversary, and the aim is to protect the stored message or abort the protocol if there is any tampering.
 The study of these models has led to some important cryptographic and communication primitives, such as tamper detection codes and non-malleable codes.
 In this work, we extend the scope of the theory of tamper detection codes against an adversary with quantum capabilities. 
	We consider encoding and decoding schemes that are used to encode a $k$-qubit quantum message $\vert m\rangle$ to obtain an $n$-qubit quantum codeword $\ket{\psi_m}$.
  A quantum codeword $ \ket{\psi_m}$ can be adversarially tampered via a unitary $U$ from some known tampering unitary
		family $\mathcal{U}_{\mathsf{Adv}}$ (acting on $\mathbb{C}^{2^n}$), resulting in $U \ket{\psi_m} \bra{\psi_m} U^\dagger$.
		Firstly, we initiate the general study of \emph{quantum tamper detection codes}, which detect if there is any tampering caused by the action of a unitary operator. 
		In case there was no tampering, we would like to output the original message.
		 We show that quantum tamper detection codes exist for any family of unitary operators $\mathcal{U}_{\mathsf{Adv}}$, such that $\vert\mathcal{U}_{\mathsf{Adv}} \vert < 2^{2^{\alpha n}}$  for some constant $\alpha \in (0,1/6)$; provided that unitary operators satisfy one additional condition: 
		\begin{itemize}
			\item[-] Far from the identity: for each $U \in \mathcal{U}_{\mathsf{Adv}}$, we require that its inner product with the identity operator is not too big, that is, $\vert \left\langle \mathbb{I} , U \right\rangle \vert  =  \vert \Tr(U) \vert \leq \phi 2^n$, where $\phi$ is suitably chosen parameter.
		\end{itemize}
   \vspace{2mm}  
		Quantum tamper detection codes that we construct can be considered to be quantum variants of  \emph{classical tamper detection codes} studied by Jafargholi and Wichs~['15], which are also known to exist under similar restrictions. 
 Additionally, we show that when the message set $\mathcal{M}$ is classical, such a construction can be realized as a \emph{non-malleable code} against an adversary having access to any $\mathcal{U}_{\mathsf{Adv}}$ of size up to $2^{2^{\alpha n}}$. 
	\end{abstract}

\section{Introduction} \label{sec:intro}	
Traditionally, cryptographic schemes have been analyzed assuming that an adversary has only black-box access to the underlying functionality and no way to manipulate the internal state. \emph{Tamper-resilient cryptography} is a model in cryptography where an adversary is allowed to  tamper  with the internal state of a device (without necessarily knowing what it is) and then observe the outputs of the tampered device.
 By doing so, an attacker may learn additional sensitive information that would not be available otherwise. One natural approach to protect against such attacks is to encode the data on the device in some way.
 One can try to use error-correcting codes such as Reed-Solomon codes, but such an encoding will prevent tampering with bounded Hamming weights, typically less than the distance of codes.
Tamper detection codes introduced by Jafargholi and Wichs~\cite{JW15} provide meaningful guarantees on the integrity of an encoded message in the presence of a tampering adversary, even in settings where error correction and error detection may not be possible. 
	
	Consider the following: suppose one wants to store a message in a database accessible to an adversary.
		 The adversary is then allowed to tamper the stored message using a function $f$  from some function family $\cIF_{\mathsf{Adv}}$. Naturally, from a decent storage, we expect two properties~-
\begin{itemize}[•]
\item If there is tampering, we should be able to detect it with high probability.
\item If there was no tampering, then we should always be  able to recover the original message. 
\end{itemize}

Let $\mathcal{M}$ be the set of messages, and let the storage be labelled by $\mathcal{C}$.
For such a scheme, we require an encoder (\textsf{Enc}) from $\mathcal{M}$ to $\mathcal{C}$ and a decoding procedure (\textsf{Dec}) that reverses this operation.
The decoder \textsf{Dec}  is additionally allowed to output a special symbol $\perp$, to indicate that the message was tampered.
The experiment can be modelled as a simple three-step procedure:
		\begin{enumerate}[a)]
			\item  A message $m \in \mathcal{M}$ is encoded via a (possibly randomized) encoder $\enc: \mathcal{M} \rightarrow \mathcal{C}$, yielding a codeword $c =  \enc (m).$ 
			\item An adversary can tamper $c$ (non-trivially) via a  function $f$ from some known tampering function family $\cIF_{\mathsf{Adv}}$, resulting in $\hat{c}=f(c)$.
			\item The tampered codeword $\hat{c}$ is then decoded to a candidate message $\hat{m}\in \mathcal{M} \cup \lbrace \perp \rbrace$ using a (possibly randomized) decoder $\dec : \mathcal{C} \to \mathcal{M} \cup \{ \perp \}$.
		\end{enumerate} 
The properties that we desire from this scheme are:
\begin{enumerate}[A.]
\item \label{item:intro_completeness_1} $\Pr\left(\dec\left(\enc(m)\right)=m\right)=1$ (\emph{Completeness}).	
\item \label{item:intro_soundness_1} $\Pr\left(\dec \left( f \left( \enc(m) \right) \right)= \perp \right) \geq 1-\eps$ (\emph{Soundness}).	
\end{enumerate} 	
Property~\ref{item:intro_completeness_1} indicates that if no one tampers anything, we can always get back the original message. Property~\ref{item:intro_soundness_1} states that the decoder can detect every non-trivial tampering with probability~\footnote{The probability stated above is taken over the randomness of the encoder and decoder.}  at least $1-\eps$.
If some encoding and decoding scheme (\textsf{Enc, Dec}) satisfies the above properties, we say that it is an $\eps$-secure tamper detection code (for family $\cIF_{\mathsf{Adv}}$).

Note that Property~\ref{item:intro_soundness_1} can hold in two different degrees.
One, it is valid for all messages $m$; where we call the scheme to be a strong tamper detection code (or simply tamper detection code).
And two, it can be valid for a randomly chosen $m$, in which case we say the scheme is a weak tamper detection code. 
In this work, we restrict ourselves to the strong form of tamper detection.

For tampering to be meaningful, we assume that $f$ is not the identity map.  
It is easy to see that for any function family $\cIF_{\mathsf{Adv}}$, the storage size $\vert \mathcal{C} \vert$ has to be greater than or equal to $\vert \mathcal{M} \vert$. Otherwise, the encoding scheme will be many-to-one, and Property~\ref{item:intro_completeness_1} can not be satisfied. 
Also, the larger the family $\cIF_{\mathsf{Adv}}$ becomes, the stronger the adversary gets, and we expect the size of $\mathcal{C}$ to increase.
This raises a natural question: for a given $\mathcal{M}$ and $\cIF_{\mathsf{Adv}}$, how large does $\mathcal{C}$ need to be?   
\paragraph*{\textbf{Connection to Error Detection.}}	
One can note a fairly straightforward relation between tamper detection codes and error detection codes. 
Consider an error-correcting code with minimum distance $d$.
Then, one can use it as a tamper detection code (with $\eps=0$) against an adversary of bounded Hamming weight. 
In tamper detection, we aim to prevent against a much stronger adversary.
 Of course, one can not have an error-detecting code of an arbitrary distance, and hence, tamper detection comes at the cost of some uncertainty in decoding, reflected in Property~\ref{item:intro_soundness_1}. 
 Additionally, in the case of tamper detection, Property~\ref{item:intro_soundness_1} only requires that tampering to be detected. In particular, we have no requirement to recover the original message. In contrast, the Hamming weight bound of $\frac{d}{2}$ on the adversary in the case of error detection guarantees such a recovery. 

\paragraph*{\textbf{Relaxed Tamper Detection.}}
The motivation for a tamper detection code is to construct a storage where it is hard for an adversary to change an encoding of a message to the encoding of some other message. 
A similar effect can be achieved if one considers the following property instead of Property~\ref{item:intro_soundness_1}.

\begin{enumerate}
\item[{\crtcrossreflabel{$\mathrm{B}^\prime$}[item:intro_relaxed_soundness]}]\hspace{-0.25cm}. $\Pr\left( \dec\left( f \left( \enc \left( m \right) \right)\right)= \lbrace m, \perp\rbrace \right) \geq 1-\eps$ (\emph{Relaxed soundness}).
\end{enumerate}
Here, in case of tampering, a decoder is either allowed to detect tampering or output the original message $m$.
In this case, although there was some tampering and the decoder does not necessarily detect it (by outputting $\perp$), the storage still managed to revert back to the original message. Clearly, Property~\ref{item:intro_soundness_1} implies Property~\ref{item:intro_relaxed_soundness}; hence we refer to a code satisfying Property~\ref{item:intro_completeness_1} and \ref{item:intro_relaxed_soundness} as a relaxed tamper detection code.  

\subsection{Previous works}	The above experiment has been extensively studied, both in the weak and the general form, when the message set $\mathcal{M}$ and storage $\mathcal{C}$ are classical strings~\cite{JW15,CG13,FMNV14}.
In the classical setup, one typically has $\mathcal{M}= \lbrace 0,1 \rbrace^k$, $\mathcal{C}= \lbrace 0,1\rbrace^n$, and a tampering family $\mathcal{F}_{\mathsf{Adv}} \subset \mathcal{F}_n \setminus \Id_n$ where $\mathcal{F}_n$ is the set of all possible Boolean functions from $n$-bits to $n$-bits, $\mathcal{F}_n= \lbrace f:\ \lbrace 0,1 \rbrace^n \rightarrow \lbrace 0,1\rbrace^n \rbrace$.
	 Suppose we restrict ourselves to encoding and decoding strategies that are deterministic. In that case, tamper detection schemes do not exist even for the family of additive tampering $\mathcal{F}_{\Delta}=\lbrace f_{\mathrm{e}}(x) = x \oplus \mathrm{e} \rbrace_{\mathrm{e}}$ where $\mathrm{e}\in \lbrace 0,1\rbrace^n \setminus 0^n$. This can be seen as follows: let messages $m_0$ and $m_1$ be any two distinct messages with $\enc(m_0)$ and $\enc(m_1)$ as their corresponding encodings.
  Consider the function $f_{\mathrm{e}}$ for $\mathrm{e} =\enc(m_0) \oplus \enc(m_1)$.
	  The tampering then results in $\dec ( f_{\mathrm{e}}(\enc(m_i)))= m_{1-i}$ for $i \in \{0, 1\}$; making randomness a necessity for tamper detection.  
	  
	  Cramer et al.~\cite{CDFPW08} studied the problem of tamper detection for the function family $\mathcal{F}_\Delta$ and gave corresponding construction of what they refer to as \emph{algebraic manipulation detection codes}.
   \paragraph*{\textbf{Algebraic Manipulation Detection (AMD) Codes.}} These codes provide tamper detection security for the function family $\mathcal{F}_{\Delta}=\lbrace f_{\mathrm{e}}(x) = x \oplus \mathrm{e},\ \mathrm{e} \neq 0\rbrace$.
 Formally,	 
	\begin{fact}[Theorem 2, \cite{CDFPW08}] \label{thm_1}Let $q$ be a prime power and $d$ be a positive integer such that $d<q$.
	 There is an explicit $(\enc,\dec)$ construction that is tamper-secure with parameters $\left(k=d\log q,\ n=(d+2)\log q,\ \eps= \frac{d+1}{q}\right)$ against $\mathcal{F}_{\Delta}$.	\end{fact}

Recall that $k$ and $n$ are bit lengths of message and codewords,
respectively.
The additive overhead of $n$ over $k$ measures the efficiency of AMD codes.
	  An optimal code for parameters $k$ and $\eps$ has the smallest possible $n$.
	    For the security parameter $\eps \leq 2^{-\lambda}$, Fact~\ref{thm_1} gives $(k,\ k+ 2\lambda + 2\log(d+1),\ 2^{-\lambda})$ AMD codes.
	     Thus, the overhead for  codeword length (over the message length) is $2\lambda + 2\log(d+1)$, which was later shown to be optimal up to a multiplicative factor two~\cite{CPX14}.
      
\paragraph*{\textbf{Classical Tamper Detection.}}
	AMD codes provide tamper detection security against a function family of size $2^n$.
	 However, the size of the tampering family $\mathcal{F}_{\mathsf{Adv}}$ can be up to $2^{n 2^n}$ when one considers all classical Boolean functions $f$ from $n$-bits to $n$-bits.
	 Thus, it is interesting to see how big this family can be made, while achieving tamper detection.
	 Again, one can see that it is not possible to construct tamper detection codes for the complete family of size $2^{n 2^{n}}$.
	 For example, consider a family of functions $\mathcal{F}_{\mathrm{con}} = \lbrace f_i(m) = \enc({i}) \rbrace_{i \in \mathcal{M}}$. No scheme can satisfy Property~\ref{item:intro_soundness_1} (or even \ref{item:intro_relaxed_soundness}) for such a family.
  
	Interestingly, Jafargholi and Wichs~\cite{JW15} showed that tamper detection codes indeed exist for any $\cIF_{\mathsf{Adv}}$ of size upto $2^{2^{\alpha n}}$ $(\text{for any constant }\alpha < 1)$,  as long as every function $f \in \cIF_{\mathsf{Adv}}$ satisfies two additional conditions:	
	\begin{itemize}
		\item High min-entropy: $f(U_X)$ has sufficiently high min-entropy\footnote{    For a random variable $X$, it min-entropy is  $\hmin{\left( X \right)} = - \log \left(\max_{x \in \supp(X)} \Pr(X=x) \right)$.}, where $U_X$ is the uniform distribution on the domain of $f$. 
		\item Few fixed points:  There are not too many points such that  $f(x) = x.$
	\end{itemize}  
	The condition of high min-entropy avoids functions that put too much weight on a single point in the output.
	In particular, it avoids functions that are close to constant functions. 
	Similarly, the condition of a few fixed points avoids functions that are close to the identity map. 
	This result shows that tamper detection codes exist against any family that avoids these cases, even for those with size doubly exponential in $n$.
	 Note that this result is based on a probabilistic argument, and as such, it only shows the existence of such codes, and it is not known if they can be constructed efficiently. 
  However, for smaller families (having sizes upto $2^{\mathrm{poly}\left(n\right)}$), one can indeed construct them efficiently~\cite{FMNV14} in the “common reference string” (CRS) model. 

	\subsection{Our results}
	In this work, we aim to extend the scope of the theory of tamper detection to include adversaries that are capable of doing quantum operations.
	 Hence, a family of unitary operators is a natural place to start the discussion. 
  In particular, we consider a setting where the space of codewords $\mathcal{C}$ is of quantum states, and an adversary can apply a unitary operator from a known family of unitary operators $\mathcal{U}_{\mathsf{Adv}}$.      
	 The analogous question of tamper detection can now be asked in different scenarios:
	 \begin{enumerate}[1.]
	 \item \label{question:1} Do tamper detection codes exist when $\mathcal{M}$ is the set of $k$-bit (classical) messages?
	 \item \label{question:2} Do tamper detection codes exist when $\mathcal{M}$ is the set of $k$-qubit (quantum) messages?
	 \item \label{question:3} Can these constructions be made efficient, potentially considering families of relatively small size, say, $\vert \mathcal{U}_{\mathsf{Adv}} \vert \leq 2^{\mathrm{poly}\left(n\right)}$?  
	 \end{enumerate}
	The first and the second question are direct analogues of the tamper detection theory when the adversary has their action defined via a unitary operator (instead of classical bits-to-bits manipulation).
 The first question considers the scenario of protecting classical information from a quantum adversary, whereas, for the second question, the information to be stored is itself quantum.
 The third question is inspired by the fact that efficient classical tamper detection codes (such as AMD codes) exist when the adversarial family has small cardinality. 
We provide affirmative answers to questions 1 and 2 using probabilistic arguments.
Partially addressing question 3, as an example of efficient construction, we show that a natural quantum analogue of classical AMD codes is sufficient for the purpose. 

\paragraph*{\emph{How far does the classical theory take us in question~\ref{question:1}?}}

Before going towards \emph{truly} quantum encoding-decoding strategies, one can ask if the existing classical schemes themselves provide us security against unitary tamperings when $\mathcal{M}$ is classical. 
There is a natural strategy to follow: Consider a classical tamper detection code with the encoder $\mathsf{Enc_{Cl}}$. 
Since the encoder is randomized, for a message $m \in \lbrace 0,1 \rbrace^k$ and randomness $r$, its encoding is given as $\mathsf{Enc_{Cl}}\left( m ,r\right).$ 
Define $\enc_Q(m) =  R_0 \sum_{r} \ket{\mathsf{Enc_{Cl}}\left( m ,r\right)}$, where $R_0$ is an appropriate normalization constant.
After the adversary acts via a unitary $U$, the decoder simply measures in the computational basis, forcing the tampering to be effectively classical. 
Then, one can try to use the classical decoder to recover the message.

The rationale for the above strategy is simple.
Although the tampering can be non-classical (that is, not via a function $f: \lbrace 0,1 \rbrace^n \to \lbrace 0,1 \rbrace^n$), the decoder can first measure in the computational basis.
The resultant operation can now be treated as a (potentially randomized) function from $n$-bits to $n$-bits.
And thus, a unitary adversary followed by the computational measurement can be simulated by a randomized classical adversary given by $\mathcal{F}_{\mathsf{Adv}}$.
However, classical tamper detection does not protect against arbitrary function families.
Thus, one would additionally need a statement of the following form:

Given an adversarial unitary family $\mathcal{U}_{\mathsf{Adv}}$  there exists a classical function family $\mathcal{F}_{\mathsf{Adv}}$ such that:
\begin{enumerate}  \item\label{cond:one_for_simulation} There exists a classical tamper detection code against $\mathcal{F}_{\mathsf{Adv}}$.
    \item \label{cond:two_for_simulation} For every $U \in \mathcal{U}_{\mathsf{Adv}}$ and $c \in \lbrace 0,1 \rbrace^n$, its action followed by measurement in the computational basis can be emulated classically via $\mathcal{F}_{\mathsf{Adv}}$. That is,    
    \[ \Pr\left(\text{measurement results in } c^\prime~  \vert \text{ codeword was $c$}\right) = \langle c^\prime \vert U(c) \vert c^\prime \rangle = \sum_{f \in \mathcal{F}_{\mathsf{Adv}}} P^U(f) \mathbbm{1}_{f(c)=c^\prime},\]    
    where $P^{U}(f)$ is a probability distribution supported on  $\mathcal{F}_{\mathsf{Adv}}$ depending only on $U$ (and not on $c$), and $\mathbbm{1}$ is the indicator function.
\end{enumerate}
Typically, one would need $\vert \mathcal{F}_{\mathsf{Adv}} \vert \leq 2^{2^{\alpha n}}$ for some appropriately chosen constant $\alpha<1$ in addition to every $f$ having enough min-entropy and few fixed points.
Indeed, one can construct such $\mathcal{F}_{\mathsf{Adv}}$ for some families $\mathcal{U}_{\mathsf{Adv}}$.
For example, consider the family of generalized Pauli operators (see Section~\ref{subsec:Generalized_Pauli} for the definition).
\begin{example} \label{example:classical_reduction_for_amd}
The unitary operators in the family are indexed by $a,b$ and given as $\sigma_{a,b}= X_a Z_b$.
A rather straightforward calculation leads to the following:
\begin{itemize}
    \item[-] $X_a Z_b$ acting on any $c$ followed by computational basis measurement  results in $c+a$ with probability $1$.
\end{itemize}
Now, consider $\mathcal{F}_{\Delta}=\lbrace f_{\mathrm{e}}(x) = x \oplus \mathrm{e},\ \mathrm{e} \neq 0\rbrace$.
Define $P^{a,b}({f_e})= \delta_{a,e}$ where $\delta$ is the standard Kronecker delta function.
Then it is easy to verify that for $\sigma_{a,b}$ such that $a \neq 0$, 
\[\langle c^\prime \vert \sigma_{a,b}(c) \vert c^\prime \rangle =  \sum_{e \neq 0 } P^{a,b}(f_e) \mathbbm{1}_{f_e(c)=c^\prime}.\]
\end{example}
Whereas, any $\sigma_{a,b}$ with $a=0$, the codeword is not even perturbed by the action of $\sigma_{a,b}$ as the $Z_b$ operator can only result in adding a global phase to classical messages.
Thus, any (non-trivial) action of a generalized Pauli operator, followed by measurement, can be simulated by $\mathcal{F}_{\mathsf{Adv}}= \mathcal{F}_{\Delta}$.
This gives us the following:
\begin{theorem}[Quantum AMD codes] \label{QAMD_informal}
Let $q$ be a prime power and $d$ be an integer such that $0<d<q$. Let $\mathcal{U}_{\mathsf{P}_N}$ be the group of generalized Pauli operators~\footnote{The generalized Pauli matrices we define are the so-called non-Hermitian Sylvester Generalized Pauli matrices. See Section~\ref{subsec:Generalized_Pauli} for the definition.} acting on $n=\log N$ qubits. 
There exists an efficient $(\enc, \dec)$ scheme that is relaxed tamper-secure against $\mathcal{U}_{\mathsf{P}_N}$ with parameters 
$\left(k=d\log q,\ n=\left( d+2 \right)\log q,\ \eps= \left(\frac{d+1}{q}\right)^2\right)$.
\end{theorem}
Since there exists a family $\mathcal{F}_{\Delta}$ that can simulate generalized Pauli operators, we can directly use the classical scheme to detect a generalized Pauli operator adversary (see Appendix~\ref{qamd} for proof).
However, it is not clear if, for a general unitary family $\mathcal{U}_{\mathsf{Adv}}$ (following some reasonable conditions), there exists a classical family $\mathcal{F}_{\mathsf{Adv}}$ satisfying conditions~\ref{cond:one_for_simulation} and \ref{cond:two_for_simulation}.
And hence, in general, we can not ascertain that the natural quantum analogue of merely taking superpositions of classical encodings will suffice.

Now, we move on to the main contribution of the work, considering  general families of unitary operators.
Recall that classical results are proved under two restrictions.
One, every function has enough min-entropy.
And two, every function has at most a few fixed points (also referred to as the \emph{far from the identity condition}). 
We also provide our results under similar restrictions.
Note that when considering a unitary family, we readily have the min-entropy condition satisfied.
So, we additionally impose a condition that captures closeness to the identity.
We require that for every unitary operator $U\in\mathcal{U}_{\mathsf{Adv}}$, its inner product with the identity map $\left(\left\vert \left\langle \Id, U \right\rangle\right\vert = \left\vert \Tr(U) \right\vert\right)$ is bounded away from $N$.
The main contribution of this work can then be stated as follows:
\begin{theorem}[Quantum tamper detection for quantum messages]\label{qinfo_qt} 
Let $\mathcal{M}$ be the set of quantum messages and
let $\mathcal{U}_{\mathsf{Adv}} \subset \mathcal{U}\left(\mathbb{C}^{2^n}\right)$ be a family of size $2^{2^{\alpha n}}$ for some constant $\alpha < \frac{1}{6}$. 
Moreover, every $U \in \mathcal{U}_{\mathsf{Adv}}$ is such that $\vert \Tr(U) \vert \leq \phi 2^n$, where $\phi$ is a constant strictly less than 1. 
Then there exists a quantum tamper detection code against $\mathcal{U}_{\mathsf{Adv}}$.
\end{theorem} 
Note that in the above theorem, $\phi$ is not an absolute constant but depends on the size of plaintext space $K$ and the security parameter $\eps$.

Although our main motivation is to consider tamperings against quantum messages, as a \emph{warm-up}, we consider the case of classical messages.
This will help us to demonstrate our technique, give a brief overview and establish some bounds that will be used later.

\begin{theorem}[Quantum tamper detection for classical messages]\label{cinfo_qt} Let $\mathcal{M}$ be the set of classical messages and
let $\mathcal{U}_{\mathsf{Adv}} \subset \mathcal{U}\left(\mathbb{C}^{2^n}\right)$ be a family of size $2^{2^{\alpha n}}$ for some constant $\alpha < \frac{1}{6}$. 
Moreover, every $U \in \mathcal{U}_{\mathsf{Adv}}$ is such that $\vert \Tr(U) \vert \leq \phi 2^n$, where $\phi$ is some constant strictly less than 1. 
Then there exists a quantum tamper detection code against $\mathcal{U}_{\mathsf{Adv}}$.
\end{theorem}	

We also show that even if one drops the condition on trace, we can achieve a \emph{relaxed} version of quantum tamper detection (where quantum counterparts of Property~\ref{item:intro_completeness_1} and \ref{item:intro_relaxed_soundness} are satisfied).
Again, here we state the theorem informally.
The formal statement, along with its proof, is presented in Section~\ref{Sec:Relaxed_TD}.

\begin{theorem}[Relaxed quantum tamper detection for classical messages]\label{cinfo_rqt} 
 Let $\mathcal{M}$ be the set of classical messages and
let $\mathcal{U}_{\mathsf{Adv}} \subset \mathcal{U}\left(\mathbb{C}^{2^n}\right)$ be a family of size $2^{2^{\alpha n}}$ for some constant $\alpha < \frac{1}{6}$. 
Then there exists a relaxed quantum tamper detection code against $\mathcal{U}_{\mathsf{Adv}}$.
\end{theorem}	 
Note that, Theorem~\ref{cinfo_rqt} allows us to also include operators that are close to the identity operator.
It is not hard to see that such a relaxation to Property~\ref{item:intro_relaxed_soundness} is necessary as one can not satisfy Property~\ref{item:intro_soundness_1} with such operators.

\paragraph*{\textbf{Proof overview.}}	 
	 Similar to the proof provided by~\cite{JW15}, our proofs for Theorem~\ref{qinfo_qt},~\ref{cinfo_qt} and~\ref{cinfo_rqt} use probabilistic arguments via Chernoff-like tail bounds for limited independence.
Before going ahead, we would like to fix some notation.

For a matrix $A$, let $A(i)$ denote the $i$-th column of $A$, which we will often treat as a vector.
When dealing with classical messages, we will denote them as $m \in \mathcal{M}$, whereas for quantum messages, we will use $\ket{m} \in \mathcal{M}$ (or $\ket{s} \in \mathcal{M}$ to explicitly indicate that the message is in superposition).
Moreover, we use $K=2^k$ and $N=2^n$, for ease of presentation.

  Let us first consider the case when $\mathcal{M}$ is the set computational basis states, $\mathcal{M}=\lbrace \ket{m}, m \in \lbrace 0,1 \rbrace^k \rbrace$.
  Our scheme uses a strategy where encoding is done by a Haar-random isometry $V$.
 For a fixed $V\in \mathcal{U}\left(\mathbb{C}^{N}\right)$, our encoding scheme is fairly natural; we encode a classical message $m$ as the $m$-th column of $V$, giving $\enc(m)= \ket{V(m)}$.
  
  Then the quantum tampering experiment can be thought of as below: 
\begin{enumerate}[1.]
\item A $k$-bit message $m$ is encoded in $n$-qubits via $V$, resulting in $\vert \psi_m \rangle= \enc(m) = \ket{V(m)}$.
\item An adversary then tampers with $U \in\mathcal{U}_{\mathsf{Adv}}$, resulting in the state $U \vert \psi_m \rangle \langle \psi_m\vert U^\dagger$.
\item\label{measurement_in_3} For $i \in \lbrace 1,2,\ldots,K\rbrace$, define $\Pi_{i} =\ket{\psi_i}\bra{\psi_i}$ and let $\Pi_{\perp}= \Id- \sum_{i} \Pi_{i}$. 
The decoder measures with the POVM $\lbrace  \Pi_1 ,  \ldots, \Pi_{K},\Pi_{\perp}\rbrace$.
\item If the measurement results in $\Pi_\perp$, then abort with detection of tampering. 
Otherwise, apply ${V}^\dag$ and output the resulting candidate message $\hat{m}$.
\end{enumerate} 	 

The completeness of the protocol is easy to check.
To show that the above encoding-decoding is $\eps$-tamper secure, one needs $\Pi_{\perp}$ to be a high probability event for any non-trivial tampering;  $\Tr\left(\Pi_{\perp} U \vert\psi_m\rangle \langle \psi_m \vert U^{\dag}\right) \geq 1-\eps$ (\emph{Soundness}). 
For that, we define the following random variables:
\begin{enumerate}
\item[•] $X_{js}= \vert \langle \psi_j \vert U \vert \psi_s\rangle \vert^2$ denotes the probability that message $s$ was decoded to $j$. 
\item[•] $X_{s}= \sum\limits_{j\neq s} X_{js}$ denotes the probability of decoding $s$ to a message other than $s$ and $\perp$.
\end{enumerate}
 Measurement results in either the same $\Pi_s= \ket{\psi_s}\bra{\psi_s}$ (with probability $P_\mathsf{same} = X_{ss}$) or one of the $\Pi_j = \ket{\psi_j}\bra{\psi_j}$ that is different from $s$ (with probability $P_{\mathsf{diff}}$) or $\Pi_\perp$ that indicates the tampering (with probability $P_{\perp}$). Thus, $P_{\mathsf{same}}+P_{\mathsf{diff}}+ P_{\perp}=1$.
Recall that we need to lower bound the probability of obtaining $ P_{\perp}$.
We do this by upper bounding $P_\mathsf{same}$ and $P_{\mathsf{diff}}$, which requires us to prove sharp Chernoff-like tail bounds for random variables $X_{ss}$ and $X_{s}$, respectively. This completes our proof for $\mathcal{M}=\lbrace 0,1 \rbrace^k$.

The setup when $\mathcal{M}$ is quantum (that is, messages to be stored are $k$-qubit states), is slightly more involved. 
Let $\ket{s} \in \mathcal{M}$ be a message that we want to store.
 Note that we need to preserve not only $2^k$ basis states but also the arbitrary superposition; arbitrary message $\ket{s}$ is a linear combination of computational basis states $\ket{s}= \sum_{i} \alpha_i \ket{b_i}$.
 Suppose one uses a direct linear extension of the earlier encoding-decoding strategy, $\enc(\ket{s})= \sum_i \alpha_i \enc{(b_i)}$.
 The measurement in step~\ref{measurement_in_3} is done over the basis encodings $\lbrace \enc(b_i) \rbrace$, and hence it can destroy the superposition.  
 To recover $\ket{s}$, it is necessary to keep $\ket{s}$ intact, and in particular, the resulting state after the measurement should not be disturbed too much from the pre-measurement state $\enc(\ket{s})$.
 To remedy this, we modify the decoder slightly, where we do measurement with a two-outcome POVM (instead of $K+1$ outcomes).
The binary POVM we use corresponds to the projection on $\enc(\mathcal{M}) = V(\mathcal{M})$ (and its orthogonal complement).
Hence, for $\ket{s} \in \mathcal{M}$, we require that any adversarial unitary $U \in \mathcal{U}_{\mathsf{Adv}}$ takes $\enc(\ket{s})$ to a vector in the orthogonal complement of $V(\mathcal{M})$.
This reduces the problem of tamper security of $\ket{s}$  to Chernoff-like tail bounds for a slightly different random variable $X_m= \bra{\psi_m} U\  (\enc\ket{s})\left(\enc\ket{s}\right)^{\dagger} \ U^\dagger \ket{\psi_m}$.

To prove sharp Chernoff-like tail bounds for random variables $X_{ss}$, $X_{s}$, and $X_m$, we use techniques from representation theory.
The proof uses Weingarten calculus and some properties of the symmetric group.

 We note that Theorem~\ref{cinfo_qt} (regarding the security of classical messages) follows as a corollary of Theorem~\ref{qinfo_qt} (regarding the security of quantum messages), as the former is a strict subset of the latter.
 Nonetheless, we include it as we also show Theorem~\ref{cinfo_qt} in the relaxed form, on an adversarial family with no trace bound needed (see Theorem~\ref{cinfo_rqt}).
 This is further used to show the existence of \emph{non-malleable codes} via a standard reduction (see Theorem~\ref{thm:non-malleability}) against a unitary family of size upto of size $2^{2^{ n/6}}$.

\subsection*{Related Works and Future Directions.}

Since Shor's work on the existence of error-correcting structures for the quantum framework~\cite{Shor_coding}, there has been a rich history of quantum error correction~\cite{CSS,Dan_thesis, Kitaev,steane}. 
One can draw similar parallels between quantum error correction and quantum tamper detection as those present in the classical framework. 
In particular, tamper detection schemes try to handle an error set that is not bounded by weight with a possible loss in the ability to correct.

\textbf{Quantum Authentication Schemes (QAS).} The work of~\cite{GC17,ABW09} studies the notion of non-malleability in quantum authentication schemes. In quantum authentication schemes, both the encoder and decoder have a pre-shared private random key $\mathrm{K}$ that is not accessible to an adversary.
We require that in the absence of an adversary, the received state should be the same as the sent state, and otherwise, with high probability, either the decoder rejects, or the received state is the same as
that sent by the encoder. It is known that such quantum authentication schemes exist (for example, Clifford authentication~\cite{GC17}), whereas tamper detection schemes are keyless.
Similarly, a few other works have also considered a ``tampering" adversary~\cite{Anne,Dan_clone}.
Again, these works are keyed primitives, making them different from tamper detection that works without keys.

Classically, tamper detection codes have turned out to be a fruitful object with rich applications. The work of~\cite{DPW10} introduced non-malleable codes for which decoding a tampered codeword either results in an original message or a message unrelated to $m$.
The work of~\cite{JW15} made the connection between tamper detection and non-malleable codes more explicit; by giving a modular construction of non-malleable codes out of weak tamper detection codes and leakage-resilient codes.
There is a vast body of literature that considers tampering attacks using other approaches besides non-malleable codes and tamper detection codes (see~\cite{MDRHX11,MDT10,MTE03,MKSXK12,SKD11,RATST04,VAVY11,YMAD06,TBA11,KR12}).
We refer to~\cite{DPW10} for a more detailed comparison between these approaches and non-malleable codes, which have been a central object of study in recent times.

\subsection{Subsequent works on tamper detection and non-malleable codes\label{otherworslabel}}

\subsubsection*{On tamper detection in the qubit-wise tampering model}

 In \cite{Ber23}, Bergamaschi studied a particular subclass of tamper detection codes, namely, against an adversary holding only Pauli operators.
 In what they refer to as \emph{PMD codes}, they construct an efficient tamper detection scheme against such a Pauli adversary when (plaintext) messages are quantum.
 Hence, as mentioned by them, PMD codes can be thought of as a natural generalization of quantum AMD codes.
 We would also like to point out that the existence of such codes for quantum messages is also implied by our work as the family of Pauli operators falls within the scope of Theorem~\ref{theorem:td_quantum_messages}.
 As an application, they use PMD codes to construct keyless authentication codes against qubit-wise tamperings, a task that is provably impossible, solely with a classical encoding. 
 
\subsubsection*{On non-malleability in the split-state tampering model}
 In another work, Aggarwal, Boddu and Jain~\cite{ABJ22} defined the notion of non-malleable codes for classical messages against quantum adversaries (having access to shared entanglement) in the \emph{split-state model}, where cipher-text is split into two parts, and the adversary is allowed to tamper them independently (via unitaries of the form $U_1 \otimes U_2$).

\begin{figure}[ht]
\centering
\begin{tikzpicture}

\node at (1,4.5) {$m$};
\draw (1.2,4.5) -- (5,4.5);
\draw (5,4.5) -- (11,4.5);
\node at (11.5,4.7) {$m$};

\draw (2,1.5) -- (3,1.5);
\draw (3,-0.5) rectangle (4.5,3.5);
\node at (3.8,1.5) {$\enc$};

\draw  (2,1.5) -- (2,4.5);


\draw (4.5,1.6) -- (5.2,1.6);
\draw (7.8,1.6) -- (9,1.6);
\node at (11.5,1.6) {$m'$};
\draw (11,1.4) -- (12,1.4);



\node at (6.5,1.4) {$\phi \in \Phi_{\mathsf{Adv}}$};
\draw (5.2,-0.8) rectangle (7.8,3.8);



\draw (9,-0.5) rectangle (11,3.2);
\node at (10,1.5) {$\dec$};

\end{tikzpicture}
\caption{Tampering process.}\label{fig:splitstate1}
\end{figure}

\begin{definition}[\cite{ABJ22}~non-malleable codes against adversary family~${\Phi}_{\mathsf{Adv}}$]\label{def:nmcodes}
We say that an encoding-decoding scheme $\left(\enc, \dec \right)$ (see Definition~\ref{def:enc-dec_schemes} and Figure~\ref{fig:splitstate1}) is $\eps$-non-malleable secure against adversary family~${\Phi}_{\mathsf{Adv}}$ for classical messages $\mathcal{M}$, if for all $m\in \mathcal{M}, \phi \in {\Phi}_{\mathsf{Adv}}$, the following holds:
\[\dec \left( \phi \left( \enc(m)  \enc(m)^\dagger \right)   \right) \approx_{\eps} p_\phi m + (1-p_\phi) \eta_{\phi},\]
where $(p_\phi, \eta_{\phi} )$ depend only on adversary $\phi$. Here $p_\phi \in [0,1]$ and $\eta_{\phi}$ are independent of original message $m$.
\end{definition}
This work considers a much more general class of unitaries (which are not necessarily in a split form).
Of course, this comes at the cost that their constructions are explicit and efficient, whereas our constructions are probabilistic and existential.
Note that this is also seen in the classical tamper detection literature, where split-state codes are efficient, whereas the codes against a general adversary are known to exist (without any explicitly known construction). 
	\section*{Organization of the paper}
	 
	For a quantum adversary with access to unitary operators, the \emph{Haar} measure is the canonical measure to work with.
 For getting bounds on unitary operators, we use Weingarten functions as a tool.
	  Well-known, relevant results are summarized in Section~\ref{wuc}. Additionally, Section~\ref{sec2} also contains elementary observations on permutation groups, along with some technical proofs.
	  In Appendix~\ref{qamd}, we prove Theorem~\ref{QAMD_informal}; in Section~\ref{cqtdc}, we prove Theorem~\ref{cinfo_qt} and Theorem~\ref{cinfo_rqt}; and in Section~\ref{qtdc}, we prove Theorem~\ref{qinfo_qt}.
	  All the proofs involve technical tail bounds regarding moments of certain random variables, which we include in  Appendix~\ref{sec:app_classical} and \ref{sec:app_quantum}.

\section{Preliminaries}\label{sec2}
\subsection{Some notation}
All the logarithms are evaluated to the base $2$. Consider a finite-dimensional Hilbert space $\cH$ endowed with an inner product $\langle \cdot, \cdot \rangle$ (we only consider finite-dimensional Hilbert spaces).
For $p \geq 1$ we write $\| \cdot \|_p$ for the Schatten $p$-norm. 
We use $\rho_1 \approx_{\eps} \rho_2$ to mean that $\Vert \rho_1 - \rho_2 \Vert_1 \leq \epsilon$.
A similar convention will be followed for two probability distributions as well.
A quantum state (or a density matrix or a state) is a positive semi-definite matrix on $\cH$ with the trace equal to $1$. It is called {\em pure} if and only if its rank is $1$.  Let $\ket{\psi}$ be a unit vector on $\cH$, that is $\langle \psi,\psi \rangle=1$.
The topological space of norm-1 vectors (the unit $N$-sphere) in a normed $N$-dimensional vector space $V$, is denoted as $\mathbb{S}^{(N-1)}(V)$. 
When $V$ is clear from the context, we drop it.
For an $n$-dimensional vector $v$, we will use the standard notation $v=(v_1, \ldots, v_n)$ and thus $v_i$ will refer to the $i$-th coordinate.
Similarly, for a matrix $M$, we will denote its $i,j$-th entry by $M_{ij}$.

\begin{itemize}
    \item[-] A {\em unitary} operator $U:\cH \rightarrow \cH$ is such that $U^{\dagger}U = U U^{\dagger} = \id$.
The set of all unitary operators on $\cH$ is  denoted by $\mathcal{U}(\cH)$.
\item[-] An {\em isometry}  $V:\cH_A \rightarrow \cH_B$ is such that $V^{\dagger}V = \id_A$ and $VV^{\dagger} = \Pi_{V(\mathcal{H}_A)}$, where $\Pi_{V(\mathcal{H}_A)}$ is the projection on the image of $\mathcal{H}_A$ under $V$.
\item[-] A {\em POVM}  $\left\lbrace M, \id - M \right\rbrace$ is  a $2$-outcome quantum measurement for $0 \le M \le \id$.
We use the shorthand $\overline{M} = \id - M$, where $\id$ is clear from the context. Similarly, a measurement $M_A$ acting on a combined space $\mathcal{H}_A \otimes \mathcal{H}_B$ will be used to represent $M_A \otimes \id_B$.
\item[-] A $k$-outcome POVM is defined by a collection $\lbrace M_1, M_2, \ldots, M_k \rbrace$, where $0 \leq M_i \leq \id$ for every $i \in [k]$ and $\sum_{i} M_i=\mathbb{I}$. 
\end{itemize}
We now state the following useful facts. 
\subsection{Some elementary bounds}
\begin{fact}\label{ub_on_factorial} For any integer $n \geq 1$ 
	
	\[ \frac{n^n}{e^{n-1}} \leq n! \leq \frac{n^{n+1}}{e^{n-1}}.   \]
\end{fact}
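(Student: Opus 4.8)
The plan is to prove the two inequalities separately, each by an easy induction on $n$ in which a ratio test carries the inductive step. For the lower bound, set $a_n = n!\,e^{n-1}/n^n$, so that $n! \geq n^n/e^{n-1}$ is equivalent to $a_n \geq 1$. The base case $n=1$ gives $a_1 = 1$. For the inductive step I would compute $a_{n+1}/a_n = e\,(n/(n+1))^n = e/(1+1/n)^n$, and then invoke the classical estimate $(1+1/n)^n < e$, which itself follows from the convexity inequality $e^{t} \geq 1+t$ (strict for $t \neq 0$) applied at $t = 1/n$ and raised to the $n$-th power. This yields $a_{n+1}/a_n > 1$, so $(a_n)_{n \geq 1}$ is increasing and hence $a_n \geq a_1 = 1$ for all $n \geq 1$, which is the claimed lower bound.

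For the upper bound, set $b_n = n!\,e^{n-1}/n^{n+1}$, so that $n! \leq n^{n+1}/e^{n-1}$ is equivalent to $b_n \leq 1$. Again $b_1 = 1$, and the ratio is $b_{n+1}/b_n = e\,(n/(n+1))^{n+1} = e/(1+1/n)^{n+1}$. Here I would use the companion estimate $e < (1+1/n)^{n+1}$; one clean way to get it is to apply $e^{-t} \geq 1-t$ at $t = 1/(n+1)$, which gives $e^{1/(n+1)} \leq 1+1/n$, and then raise to the $(n+1)$-th power. Thus $b_{n+1}/b_n < 1$, so $(b_n)_{n \geq 1}$ is decreasing and $b_n \leq b_1 = 1$, establishing the upper bound.

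There is essentially no obstacle in this argument: the entire content is the two-sided estimate $(1+1/n)^n < e < (1+1/n)^{n+1}$, a standard consequence of $e^{t} \geq 1+t$, and everything else is bookkeeping. An alternative that avoids induction is to take logarithms and bound $\log(n!) = \sum_{j=1}^{n}\log j$ by comparing with $\int_{1}^{n}\log x\,dx = n\log n - n + 1$ via the monotonicity of $\log$, which also yields $n\log n - (n-1) \le \log(n!) \le (n+1)\log n - (n-1)$; but the telescoping-ratio induction above is shorter and is what I would write up.
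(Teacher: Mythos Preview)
Your argument is correct. The paper states this inequality as a ``Fact'' in the preliminaries and offers no proof at all, so there is nothing to compare against; your ratio-test induction, driven by the standard two-sided estimate $(1+1/n)^n < e < (1+1/n)^{n+1}$, is a clean and complete justification of what the paper simply asserts.
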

 
\begin{fact} \label{fact:unitary_trace_leq_n}
Let $U$ be a unitary operator on $\mathbb{C}^N$. Then $\vert \Tr(U) \vert \leq N$.
\end{fact}

\begin{fact}\label{ub_on_ncr} For positive integers $k,n$ such that $1 \leq k \leq n$, 	
	\[ \left(\frac{n}{k}\right)^k \leq \binom{n}{k} \leq \left(\frac{en}{k}\right)^k.   \]
\end{fact}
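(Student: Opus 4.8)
The plan is to prove the two inequalities separately, using only elementary manipulations of the product formula
$\binom{n}{k} = \prod_{i=0}^{k-1}\frac{n-i}{k-i}$
together with the factorial estimates of Fact~\ref{ub_on_factorial}.

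For the lower bound I would keep the factored form $\binom{n}{k} = \prod_{i=0}^{k-1}\frac{n-i}{k-i}$ and show that every one of the $k$ factors is at least $n/k$. Indeed, for $0 \leq i \leq k-1$ the inequality $\frac{n-i}{k-i} \geq \frac{n}{k}$ is equivalent (after clearing the positive denominators $k$ and $k-i$) to $k(n-i) \geq n(k-i)$, i.e.\ to $ni \geq ki$, which holds because $n \geq k$ and $i \geq 0$. Multiplying the $k$ factors then gives $\binom{n}{k} \geq (n/k)^k$.

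For the upper bound I would instead start from $\binom{n}{k} = \frac{n(n-1)\cdots(n-k+1)}{k!} \leq \frac{n^k}{k!}$, bounding each of the $k$ numerator factors crudely by $n$, and then invoke the left half of Fact~\ref{ub_on_factorial}, namely $k! \geq k^k/e^{k-1}$. This yields $\binom{n}{k} \leq \frac{n^k e^{k-1}}{k^k} = \frac{1}{e}\left(\frac{en}{k}\right)^k \leq \left(\frac{en}{k}\right)^k$, which is the claimed bound.

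There is no real obstacle here; the only subtlety worth flagging is that the lower bound should be obtained by the factor-by-factor comparison $\frac{n-i}{k-i}\geq\frac{n}{k}$ rather than by substituting the factorial bounds into $n^k/k!$, since the latter route would lose a polynomial factor. The upper bound, conversely, has ample slack (the spare factor $1/e$), so the coarse estimate of the numerator by $n^k$ is more than enough.
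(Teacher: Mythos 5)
The paper states this as a standard fact without offering a proof, so there is nothing to compare against directly; your argument is a correct and self-contained derivation. Both halves check out: the factor-by-factor comparison $\frac{n-i}{k-i}\geq\frac{n}{k}$, which reduces to $(n-k)i\geq 0$, cleanly gives the lower bound, and the upper bound follows from $\binom{n}{k}\leq n^k/k!$ combined with the left half of Fact~\ref{ub_on_factorial}; the spare factor of $1/e$ is, as you note, harmless slack. Your remark about why one should not route the lower bound through the factorial estimates is a good one — doing so would indeed only give $\binom{n}{k}\geq\frac{1}{e}\cdot\frac{(n/k)^k}{(1-k/n)^{n-k}}\cdot\frac{1}{n-k}$ or similar, which does not obviously recover $(n/k)^k$ without extra work.
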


\subsection{Definitions for Tamper Detection}
\begin{definition}[$\eps$-net (Lemma 5.2, \cite{V10})]\label{epsnets}
Fix an $\eps >0$. Then there exists an integer $N$ and a set of vectors $\{ \ket{\psi_1},  \ket{\psi_2} , \ldots ,  \ket{\psi_N}\}$ in $\mathbb{S}^{d-1}$ such that the following properties hold:
\begin{itemize}
    \item $N \leq \left(\frac{4d}{\eps}\right)^d$. 
    \item For any state $\ket{\psi} \in \mathbb{S}^{d-1}$, there exists $j \in [N]$ such that $\Vert \ket{\psi} -  \ket{\psi_j} \Vert_1 \leq \eps.$ 
    \end{itemize}       
\end{definition}
\begin{definition}[Quantum encoding and decoding schemes\label{def:enc-dec_schemes}]
Let $\enc: \mathcal{M} \longrightarrow \mathbb{S}^{N-1}$ be a map and $\dec: \mathbb{S}^{N-1} \longrightarrow \mathcal{M} \cup \lbrace \perp \rbrace$.
Then, we say that $(\enc,\dec)$ is an encoding-decoding  scheme if the following holds:
for all $m \in \mathcal{M}$, $\Pr \left( \dec \left( \enc(m) \right) =m \right)=1$.
\end{definition}\begin{definition}[Tamper detection (against unitary adversaries)]
Let $\mathcal{U}_{\mathsf{Adv}} \subset \mathcal{U}\left( \mathbb{C}^N\right)$ be a family of unitary operators. 
We say that an encoding-decoding scheme $\left(\enc, \dec \right)$ is $\eps$-tamper secure against family $\mathcal{U}_{\mathsf{Adv}}$ for messages $\mathcal{M}$, if for all $m\in \mathcal{M}, U \in \mathcal{U}_{\mathsf{Adv}}$, the following holds:
\[\Pr\left(\dec \left( U \left( \enc(m) \right) \left( \enc(m) \right)^\dagger U^\dagger \right)= \perp \right) \geq 1-\eps.\]
Furthermore, if $\mathcal{M} = \lbrace 0,1 \rbrace^{k}$, we say that $(\enc,\dec)$ is $\left(K=2^k,N,\eps\right)$-tamper secure for classical messages, whereas if $\mathcal{M}= \mathbb{S}^{K-1}$, we say that $(\enc,\dec)$ is $\left(K, N, \eps \right)$-tamper secure for quantum messages.
\end{definition}
Now we define a \emph{relaxed version} of tamper detection. In this version, the aim of a decoder is to either detect tampering or output the original message.
Compared to the original definition of tampered detection, the relaxed version has a scope to revert a tampering, without even detecting it.
Since our result holds for classical messages (against unitary tamperings), we define relaxed tamper detection only for classical messages but one can define an analogous notion for quantum messages as well.
\begin{definition}[Relaxed tamper detection\label{def:relaxed_TD}]
Let $\mathcal{U}_{\mathsf{Adv}} \subset \mathcal{U}\left( \mathbb{C}^N\right)$ be a family of unitary operators and let $\mathcal{M}= \lbrace 0,1\rbrace^k$. 
We say that an encoding-decoding scheme $\left(\enc, \dec\right)$ is $(K,N,\eps)$-tamper secure in the relaxed setting (against  $\mathcal{U}_{\mathsf{Adv}}$), if for all $m \in \mathcal{M},\ U\in \mathcal{U}_{\mathsf{Adv}}$, the following holds:
\[\Pr\left(\dec \left( U \left( \enc(m) \right)\left( \enc(m) \right)^\dagger U^\dagger \right)= \lbrace \perp,m \rbrace \right) \geq 1-\eps.\] \end{definition}
\begin{definition}[Adversarial unitary families]
Let $\mathcal{U}_{\mathsf{Adv}} \subset \mathcal{U}\left( \mathbb{C}^N\right)$ be a family of unitary operators such that the following holds:
\begin{itemize}
    \item[1.] For all $U \in \mathcal{U}_{\mathsf{Adv}}$, we have, $\vert \mathrm{Tr} \left(U\right) \vert \leq \phi N$.
    \item[2.] $\vert\mathcal{U}_{\mathsf{Adv}}\vert \leq 2^{N^{\alpha}}$.
\end{itemize}
We call $\mathcal{U}_{\mathsf{Adv}}$ as an $\left(N, \alpha, \phi \right)$ adversarial unitary family or simply $\left(N, \alpha, \phi \right)$ family. 
\end{definition}


\begin{definition}[Random Haar encoding and decoding schemes\label{def:random_haar_enc_dec}]
		Let $\mathrm{H}$ be a random unitary drawn from $\mathcal{U}\left(\mathbb{C}^N\right)$ (according to the Haar measure).
		Let $V$ be the following matrix constructed by restricting $\mathrm{H}$ to its first $K$ columns: 
	\[V = (\mathrm{H}_1, \mathrm{H}_2,  \ldots, \mathrm{H}_{K}).\]
	
	 Note that $V$ is an isometry.
	 Consider the following encoding and decoding scheme.
	 \begin{itemize}
	     \item[-] Let $V(i)$ denote the $i$-th column of $V$.   
      For $m \in \left[ K\right]$, define $\enc(m)= \vert \psi_m \rangle = \ket{ V(m)}$.
      
      If the message set $\mathcal{S}$ is quantum, the extension is canonical.
      For $\ket{s}= \sum_m \alpha_m \ket{ m}$, the encoding is $\enc(\ket{s})= \sum_m \alpha_m \enc( {m})$.
	     \item[-] $\dec$ to be implemented according to the following procedure:
	     
	     Let $\Pi_{i} = \vert \psi_i \rangle \langle \psi_i \vert$ and $\Pi_{\perp}= \id - \sum_{i} \Pi_{i}$.
	     To decode a message $\vert \theta \rangle$, we measure $\vert \theta \rangle$ in a two-valued POVM $\big\lbrace \sum_{i} \Pi_i, \Pi_\perp  \big\rbrace$.
	     Let $ \psi^\prime $ be the post-measurement state. 
	     If the measurement results in $\perp$ then abort \textrm{\emph{(}indicating tamper detection\emph{)}}; otherwise the decoder outputs $V^\dagger(\psi^\prime)V$.   
	 \end{itemize}
	 Note that if the message set is classical \emph{\big{(}}$\mathcal{M}= \lbrace 0,1 \rbrace^k$\emph{\big{)}}, then the decoder can be reduced to the following action:
	 \begin{itemize}
	     \item[-] $\dec_\mathsf{Cl}$: Measure $\vert \theta\rangle$ in the POVM $\lbrace \Pi_{1}, \Pi_{2}, \ldots, \Pi_{K}, \Pi_{\perp}\rbrace$ , if it results in $\Pi_i$ then output $i$.
	 \end{itemize}
\end{definition}
Below we give the necessary details of permutation groups, generalized Pauli matrices, Haar random unitary operators, and Weingarten unitary calculus, which will be required to state our results. We refer the reader to~\cite{Gu13} for details on Weingarten unitary calculus.
\subsection{Permutation groups}\label{permu_group}

Let $S_n$ be the symmetric group of degree $n$ acting canonically on the set $\left[n\right]:=\lbrace 1,2,\ldots, n\rbrace$. Let $\mathrm{H} \leq S_n$ be a permutation group. For $x \in \left[ n \right]$, \textit{orbit} of $x$ under $\mathrm{H}$, denoted as $\mathcal{O}_\mathrm{H}(x)$ is the set of elements that can be reached from $x$ via $\mathrm{H}$, 
\[\mathcal{O}_\mathrm{H}(x)= \lbrace y\in [n]:\  \exists h\in \mathrm{H},\ y=h(x)\rbrace.\]

We say that $x$ is fixed by $\mathrm{H}$ if $\mathcal{O}_{\mathrm{H}}= \lbrace x\rbrace$. Otherwise, we say that $\mathrm{H}$ moves $x$. We denote the set of elements fixed by $\mathrm{H}$ as $\mathsf{Fix}\left( \mathrm{H}\right)$ and the set of elements moved as $\mathsf{Move}\left( \mathrm{H}\right)$. By extension, for $\sigma \in S_n$ we write $\mathsf{Fix}(\sigma)$ and $\mathsf{Move}(\sigma)$ to mean $\mathsf{Fix}\left(\langle \sigma \rangle\right)$ and $\mathsf{Move}\left(\langle \sigma \rangle\right)$ respectively, where $\langle \sigma \rangle$ is the group generated by $\sigma$.

Given a $\sigma\in S_n$, orbits for $\mathrm{H}=\langle\sigma\rangle$ partition the set $\left[ n\right]$ into disjoint subsets as $\mathcal{O}_{\mathrm{H}}$ gives an equivalence relation. When one writes $\sigma$ as a permutation in a disjoint cycle form, each orbit is a cycle of $\sigma$ and each cycle is an orbit, and hence, we denote an orbit (or a disjoint cycle) by $c$.
Let $C\left(\sigma\right)$ denote the set of orbits $c$ under $\mathrm{H}=\langle\sigma\rangle$.  

For an orbit $c$, let $odd(c)$ denote the number of odd elements in it and $even(c)$ be the number of even elements in it. We define an evaluation map $\mathsf{Val}$ on orbits of $\sigma$. An orbit $c$ is given a value equal to the difference between the number of odd and even elements it contains.
\[\mathsf{Val}(c)= \vert odd(c)- even(c) \vert. \] We also extend the evaluation map to $S_n$ by assigning a value for each permutation. In this case, a permutation will get a value equal to the sum of the values of all of its orbits.

\[\mathsf{Val}(\sigma)= \sum_{c\in C(\sigma)} \mathsf{Val}(c)=  \sum_{c\in C(\sigma)} \vert odd(c)- even(c) \vert.\]

We denote the set of orbits with value $1$ by $C_1(\sigma)$.
It is easy to see that $\sigma$ has full valuation $n$ if and only if it preserves the parity; that is, it takes odd elements to odd elements and even elements to even elements. 

A transposition is a cycle of size $2$.
Every permutation $\sigma \in S_n$ can be written as a product of transpositions.
Let $\mathsf{T}(\sigma)$ denote the minimum number of transpositions required to obtain $\sigma$.
It is known that $\mathsf{T}(\sigma)+ \vert C(\sigma) \vert=n$.
We use $e$ to represent identity permutation. 

\begin{lemma}\label{lemma_fix} For any $\sigma \in S_n$, we have $\vert \mathsf{Fix}(\sigma)\vert \geq 2 \vert C(\sigma)\vert -n$.
\end{lemma}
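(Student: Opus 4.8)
The plan is to reduce the inequality $|\mathsf{Fix}(\sigma)| \geq 2|C(\sigma)| - n$ to a purely combinatorial counting argument over the cycle decomposition of $\sigma$. First I would partition the set of orbits (disjoint cycles) $C(\sigma)$ into the fixed points, i.e.\ cycles of length $1$, and the non-trivial cycles, i.e.\ cycles of length $\geq 2$. Write $f := |\mathsf{Fix}(\sigma)|$ for the number of length-$1$ cycles and let $m$ be the number of cycles of length $\geq 2$; then $|C(\sigma)| = f + m$. The elements of $[n]$ lying in non-trivial cycles number $n - f$, and since each such cycle has size at least $2$, we get $n - f \geq 2m$, i.e.\ $m \leq (n-f)/2$. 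Substituting, $|C(\sigma)| = f + m \leq f + (n-f)/2 = (n+f)/2$, which rearranges to $f \geq 2|C(\sigma)| - n$, as desired.

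The entire argument is a one-line counting estimate once the cycle decomposition is in hand, so there is essentially no obstacle; the only thing to be careful about is the bookkeeping between "number of cycles" and "number of elements covered by cycles of a given type." It may also be worth noting that equality holds precisely when every non-trivial cycle is a transposition, which connects naturally to the identity $\mathsf{T}(\sigma) + |C(\sigma)| = n$ recalled just above: indeed, if all non-trivial cycles are transpositions then each contributes exactly one transposition to a minimal expression, so $\mathsf{T}(\sigma) = m$ and the bound is tight. I would phrase the proof directly in terms of this decomposition rather than invoking $\mathsf{T}(\sigma)$, since the self-contained counting is cleanest, but one could alternatively derive it from $n - |C(\sigma)| = \mathsf{T}(\sigma)$ together with the easy bound $\mathsf{T}(\sigma) \geq |C(\sigma)| - |\mathsf{Fix}(\sigma)|$ (each non-trivial cycle of length $\ell$ needs $\ell - 1 \geq 1$ transpositions).
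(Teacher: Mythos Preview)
Your proof is correct. The underlying inequality is the same as the paper's, but the paper routes it through the transposition count: it invokes $|\mathsf{Move}(\sigma)| \leq 2\mathsf{T}(\sigma)$ together with $\mathsf{T}(\sigma) + |C(\sigma)| = n$ to conclude $|\mathsf{Fix}(\sigma)| = n - |\mathsf{Move}(\sigma)| \geq n - 2(n - |C(\sigma)|) = 2|C(\sigma)| - n$. Your direct counting (``$m$ non-trivial cycles cover at least $2m$ of the $n-f$ moved points'') is exactly the content of $|\mathsf{Move}(\sigma)| \leq 2\mathsf{T}(\sigma)$ once one unpacks $\mathsf{T}(\sigma) = \sum_c (|c|-1)$, so the two arguments are equivalent; yours is simply more self-contained since it does not need $\mathsf{T}(\sigma)$ as an intermediary. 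The alternative you sketch at the end (via $\mathsf{T}(\sigma) \geq |C(\sigma)| - |\mathsf{Fix}(\sigma)|$) is precisely the paper's argument in contrapositive form.
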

\begin{proof}
Clearly if $\vert C(\sigma)\vert \leq \frac{n}{2}$, the lemma is trivially true. Suppose $\vert C(\sigma)\vert >\frac{n}{2}$. 

 Note that $\mathsf{Move}(\sigma) \leq 2 \mathsf{T}(\sigma)$ and hence $\left\vert \mathsf{Move}(\sigma) \right\vert \leq 2(n- \vert  C(\sigma)\vert)$. Since $\mathsf{Move}(\sigma)+\mathsf{Fix}(\sigma)=n$, we get $\vert\mathsf{Fix}(\sigma) \vert \geq 2 \vert C(\sigma)\vert -n$.
\end{proof}

\begin{observation}\label{obs_multiplication_cycle} It follows from the definition that, for any permutation $\sigma$ and for any transposition $\tau$, $\mathsf{T}(\tau \sigma)= \mathsf{T}(\sigma \tau) \leq \mathsf{T}\left(\sigma\right)+1$. Also, the number of cycles can increase or decrease by $1$. If elements moved by $\tau$ are in the same cycle of $\sigma$, then $C$ increases by $1$, and if they are in different cycles it decreases by 1.  $\vert C(\sigma)\vert-1 \leq \vert C(\sigma \tau) \vert = \vert C(\tau \sigma) \vert \leq \vert C(\sigma)\vert+1$. 
\end{observation}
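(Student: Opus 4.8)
\emph{Proof proposal.} The plan is to reduce everything to the identity $\mathsf{T}(\sigma) + \vert C(\sigma)\vert = n$ quoted above, so that it suffices to track how $\vert C\vert$ changes under multiplication by a transposition. The bound $\mathsf{T}(\tau\sigma) \leq \mathsf{T}(\sigma) + 1$ (and likewise $\mathsf{T}(\sigma\tau) \leq \mathsf{T}(\sigma)+1$) is immediate from the definition: concatenating a minimal transposition decomposition of $\sigma$ with the single transposition $\tau$ exhibits $\tau\sigma$ as a product of $\mathsf{T}(\sigma)+1$ transpositions, and $\mathsf{T}$ is by definition the least such length. For the equality $\mathsf{T}(\tau\sigma) = \mathsf{T}(\sigma\tau)$, I would observe that $\sigma\tau = \sigma(\tau\sigma)\sigma^{-1}$, so $\sigma\tau$ and $\tau\sigma$ are conjugate; conjugate permutations have the same cycle type, hence the same number of cycles and, via the identity above, the same value of $\mathsf{T}$.

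Next I would establish the cycle-count dichotomy by a direct computation in disjoint-cycle notation. Write $\tau = (a\; b)$. Since $\tau$ moves only $a$ and $b$, every cycle of $\sigma$ containing neither $a$ nor $b$ survives unchanged in $\sigma\tau$, so only the cycle(s) carrying $a$ and $b$ are relevant, and there are exactly two cases. If $a$ and $b$ lie in a common cycle of $\sigma$, writing that cycle out and composing with $\tau$ shows it breaks into two disjoint cycles (one through $a$, one through $b$), so $\vert C(\sigma\tau)\vert = \vert C(\sigma)\vert + 1$. If $a$ and $b$ lie in two distinct cycles of $\sigma$, the analogous computation shows those two cycles fuse into one, so $\vert C(\sigma\tau)\vert = \vert C(\sigma)\vert - 1$. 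In either case $\vert C(\sigma\tau)\vert = \vert C(\sigma)\vert \pm 1$, and by the conjugacy observation the same holds for $\tau\sigma$; together with $\mathsf{T} = n - \vert C\vert$ this also yields $\mathsf{T}(\sigma\tau) = \mathsf{T}(\sigma) \mp 1$, refining the stated upper bound.

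The only genuinely fiddly point is getting the composition convention right in the explicit cycle computation and confirming that the split/merge dichotomy is exhaustive — but the latter is automatic, since $\tau$ moves precisely the two points $a,b$, which either share a cycle of $\sigma$ or do not. Everything else is bookkeeping around the identity $\mathsf{T}(\sigma) + \vert C(\sigma)\vert = n$.
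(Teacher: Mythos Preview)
Your argument is correct and complete. The paper does not actually prove this observation; it is stated as following directly from the definitions, so there is no ``paper proof'' to compare against. Your approach---using conjugacy of $\tau\sigma$ and $\sigma\tau$ together with the identity $\mathsf{T}(\sigma)+\vert C(\sigma)\vert=n$, and then verifying the split/merge dichotomy by a direct cycle computation---is the standard way to unpack this fact and is exactly what the observation is appealing to.
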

For $i \in [0:n-1]$, let $\Sigma_i := \lbrace \sigma \in S_n :  \mathsf{T}(\sigma) =i\rbrace $ denote the  number of permutations $\sigma$ such that the number of transpositions in $\sigma$ is $i$.

\begin{observation}\label{upper_bound_on_Sigma} For $i \in [0:n-1]$ we have, $\vert \Sigma_i \vert \leq  {\binom{n}{2}}^i$.
\end{observation}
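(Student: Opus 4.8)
The plan is a direct counting argument. By the very definition of $\mathsf{T}(\sigma)$, every permutation $\sigma \in \Sigma_i$ admits a factorization $\sigma = \tau_1 \tau_2 \cdots \tau_i$ into exactly $i$ transpositions. There are precisely $\binom{n}{2}$ transpositions in $S_n$, since a transposition is uniquely specified by the $2$-element subset of $[n]$ it swaps. So I would consider the evaluation map
\[
\Phi : \underbrace{\mathcal{T} \times \cdots \times \mathcal{T}}_{i \text{ times}} \longrightarrow S_n, \qquad \Phi(\tau_1, \dots, \tau_i) = \tau_1 \tau_2 \cdots \tau_i,
\]
where $\mathcal{T}$ is the set of transpositions in $S_n$. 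The domain of $\Phi$ has cardinality $\binom{n}{2}^i$, and by the opening remark every element of $\Sigma_i$ lies in the image of $\Phi$. Hence $\vert \Sigma_i \vert$ is at most the size of the image of $\Phi$, which is at most $\binom{n}{2}^i$, giving the claim.

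As an alternative I would note one can also argue by induction on $i$: the base case $i = 0$ is immediate since $\Sigma_0 = \lbrace e \rbrace$; for $i \geq 1$, removing the last transposition from a shortest factorization of any $\sigma \in \Sigma_i$ writes $\sigma = \sigma' \tau$ with $\tau$ a transposition and $\mathsf{T}(\sigma') = i-1$ (consistently with Observation~\ref{obs_multiplication_cycle}), so $\Sigma_i \subseteq \lbrace \sigma' \tau : \sigma' \in \Sigma_{i-1},\ \tau \in \mathcal{T} \rbrace$ and therefore $\vert \Sigma_i \vert \leq \binom{n}{2} \cdot \vert \Sigma_{i-1} \vert \leq \binom{n}{2}^i$.

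There is essentially no obstacle here; the statement is a straightforward counting bound. The only point that needs a moment of care is that $\Phi$ is neither injective nor surjective onto $\Sigma_i$ — its image actually contains all permutations expressible by at most $i$ transpositions of the appropriate parity — but since we only want an \emph{upper} bound on $\vert \Sigma_i \vert$, this over-counting is harmless.
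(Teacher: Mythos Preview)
Your proposal is correct; the paper states this as an observation without proof, and your direct counting argument via the evaluation map $\Phi$ (or equivalently the induction you sketch) is exactly the standard justification one would supply.
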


Let $\mathcal{B}_{2n}$ be the set of permutations on $2n$ letters that take odd elements to even elements and vice-versa. 
\[\mathcal{B}_{2n}:= \lbrace \beta \in S_{2n} : \text{ for all }x, x +  \beta(x)= 1 \mod 2 \rbrace.\] 	

\begin{lemma} For any $\alpha\in S_{2t}$ and $\beta \in \mathcal{B}_{2t}$, we have $\vert \mathsf{C}(\beta \alpha) \vert -  \mathsf{T}(\alpha) \leq t$.		
\end{lemma}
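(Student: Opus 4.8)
**Proof proposal for the Lemma: for any $\alpha \in S_{2t}$ and $\beta \in \mathcal{B}_{2t}$, $|C(\beta\alpha)| - T(\alpha) \le t$.**

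The plan is to induct on $T(\alpha)$, the minimal number of transpositions needed to write $\alpha$. For the base case $T(\alpha) = 0$, we have $\alpha = e$, so $\beta\alpha = \beta$, and I must show $|C(\beta)| \le t$. Since $\beta \in \mathcal{B}_{2t}$ sends every element $x$ to an element of opposite parity (in particular $\beta$ is fixed-point free, and moreover $\beta(x) \ne x$ always), no cycle of $\beta$ has length $1$; in fact, since applying $\beta$ twice returns to the same parity, every cycle of $\beta$ has even length $\ge 2$. Hence $\beta$ partitions the $2t$ elements into cycles each of size at least $2$, giving $|C(\beta)| \le t$, which is exactly the claim when $T(\alpha) = 0$.

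For the inductive step, suppose the statement holds for all permutations expressible with $i-1$ transpositions, and let $\alpha \in S_{2t}$ with $T(\alpha) = i$. Write $\alpha = \alpha' \tau$ where $\tau$ is a transposition and $T(\alpha') = i - 1$. Then $\beta\alpha = (\beta\alpha')\tau$. By Observation~\ref{obs_multiplication_cycle} applied to the permutation $\beta\alpha'$ and the transposition $\tau$, we have $|C(\beta\alpha)| = |C((\beta\alpha')\tau)| \le |C(\beta\alpha')| + 1$. By the induction hypothesis applied to $\alpha'$ (noting $\beta \in \mathcal{B}_{2t}$ is unchanged), $|C(\beta\alpha')| \le T(\alpha') + t = (i-1) + t$. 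Combining, $|C(\beta\alpha)| \le i + t = T(\alpha) + t$, which rearranges to $|C(\beta\alpha)| - T(\alpha) \le t$, completing the induction.

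I expect the main subtlety — though it is minor — to be the base case: one must argue carefully that elements of $\mathcal{B}_{2t}$ have all cycles of even length (equivalently length $\ge 2$), so that the number of cycles is at most $t$. This follows because $x$ and $\beta(x)$ have opposite parities, so $x \ne \beta(x)$, ruling out fixed points; the even-length claim is not even needed, only that each cycle has size $\ge 2$. The inductive step is then a routine consequence of Observation~\ref{obs_multiplication_cycle}, which already records that multiplying by a transposition changes the cycle count by exactly $\pm 1$. One should also double-check that the decomposition $\alpha = \alpha'\tau$ with $T(\alpha') = T(\alpha) - 1$ is always available, which is immediate from the definition of $T$ as a minimal transposition length.
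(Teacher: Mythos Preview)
Your proof is correct and follows essentially the same approach as the paper's: both induct on $T(\alpha)$, handle the base case by observing that $\beta \in \mathcal{B}_{2t}$ is fixed-point free so $|C(\beta)| \le t$, and for the inductive step peel off a transposition and apply Observation~\ref{obs_multiplication_cycle}. The only cosmetic difference is that the paper explicitly constructs $\alpha'$ by splitting off the last element of a nontrivial cycle, whereas you appeal directly to the existence of a decomposition $\alpha = \alpha'\tau$ with $T(\alpha') = T(\alpha)-1$; both are equivalent.
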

\begin{proof}
	We will prove this by induction on $\mathsf{T}(\alpha)$. 
	
	\textsf{Base Case:} $ \mathsf{T}(\alpha)=0$, that is, $\alpha=\emph{e}$.  Note that for $\beta \in \mathcal{B}_{2n}$, every cycle must have a length of at least two as $\beta$ can not fix any element. Thus, $\vert C(\beta)\vert \leq \frac{2t}{2} = t$. 
	
	\textsf{Induction Hypothesis (IH):} For all $\alpha^{\prime}$ such that $\mathsf{T}(\alpha^\prime)\leq T_{0}-1$, we have, $\vert \mathsf{C}(\beta \alpha^\prime) \vert -   \mathsf{T}(\alpha^\prime) \leq t$. 
	
	We will show that the upper bound holds for $\alpha$ with $\mathsf{T}(\alpha)=T_0$.
 
	\textsf{The General Case:} Let $C(\alpha)= \lbrace C_1, C_2, \ldots, C_l\rbrace$.
 Since $\alpha \neq \emph{e}$, there exists a cycle of length strictly greater than one.
 Without loss of generality, let that be $C_l$ and $C_l= \left(x_1\ x_2\ \ldots\ x_m \right)$.
 
 Let $\alpha^\prime= C_1 C_2 \ldots C_{l-1} \left(x_1 \ x_2 \ \ldots \ x_{m-1} \right) \left( x_m\right)$. Alternatively,  $\alpha^\prime$ can be obtained from $\alpha$ by fixing $x_m$, that is,
\begin{align*}
	\alpha^\prime(x) &= x_1  &  \hspace{2cm} \text{ if } x = x_{m-1} \\
	 &= x_m   &  \hspace{2cm} \text{ if } x = x_m\\
     &= \alpha(x) &  \hspace{2cm} \text{ if } x\notin \lbrace x_m,x_{m-1} \rbrace.
\end{align*}

	$\vert C(\alpha^\prime)\vert =\vert C(\alpha) \vert+1$ which gives $\mathsf{T}(\alpha^\prime)= \mathsf{T}(\alpha)-1$. By \textsf{IH}, $\vert \mathsf{C}(\beta \alpha^\prime) \vert -  T(\alpha^\prime) \leq t$.  Also, $\alpha= \alpha^\prime \left( x_{m-1} \ x_m \right)$. Thus, $\vert C(\beta \alpha)\vert= \vert C(\beta \alpha^\prime \left( x_{m-1} \ x_m\right)\vert \leq C(\beta \alpha^\prime) +1$. The inequality follows from Observation~\ref{obs_multiplication_cycle}. Putting this along with $\mathsf{T}(\alpha)= \mathsf{T}(\alpha^\prime)+1$ we get the lemma.   
\end{proof}
Since $\mathsf{T}$ is invariant under inverse, we can replace $\mathsf{T}(\alpha)$ by $\mathsf{T}(\alpha^{-1})$. Furthermore, $\mathsf{T}(\alpha)+ \vert C(\alpha)\vert= 2t$. Hence we get the following:  
\begin{cor}\label{upper_bound_on_C} For $\alpha\in S_{2t}$ and $ \beta \in \mathcal{B}_{2t}$, we have, $\vert C(\alpha) \vert + \vert C(\beta \alpha^{-1}) \vert \leq 3t$.
\end{cor}	
\subsubsection{Generalized Pauli matrices} \label{subsec:Generalized_Pauli}
Let $q$ be a prime power and $\mathbb{F}_q$ be the field of size $q$.
And let $\omega$ denote the $q$-th primitive root of unity.
Let $X_a$ and $Z_b$ be the following collection of  operators indexed by $a,b \in \mathbb{F}_q$. 
\begin{align*}
    X_a= \sum_{x  \in \mathbb{F}_q}\ket{x+a}\bra{ x} \\
Z_b= \sum_{x  \in \mathbb{F}_q}\omega^{bx} \ket{x}\bra{ x}.
\end{align*}
The group of generalized Pauli matrices is generated by $\left\langle X_1, Z_1 \right\rangle$.
Generalized Pauli matrices obey the twisted commutation relations given by \[ \quad  X_a Z_b = \omega^{-ab} Z_b  X_a. \]
\subsection{Weingarten unitary calculus}\label{wuc} 
Weingarten functions are used for evaluating integrals over the unitary group  $\mathcal{U}(\bbC^N)$ of products of matrix coefficients~\cite{Gu13}.
 The expectation of products of entries (also called moments or matrix integrals) of Haar-distributed unitary random matrices can be described in terms of a special function on the permutation group. Such considerations go back to Weingarten~\cite{Wg78}, Collins~\cite{C03}. This function is  known as the (unitary) Weingarten function and is denoted by  $\wg$.
 Let $S_p$ be the symmetric group on $[p] =\{1, 2, . . . , p\}$. Let 
 $i = (i_1, \ldots, i_p),\ i' = (i'_1, \ldots, i'_p)$ be $p$-tuples of positive integers from $\{1,2,\ldots, N\}$. We use the notation $\delta_{\sigma}(i,i') = \delta_{i_1i'_{\sigma(1)}} \delta_{i_2i'_{\sigma(2)}} \ldots \delta_{i_pi'_{\sigma(p)}}$, where $\delta$ is the standard Kronecker delta function. For $\alpha \in S_p,  p \leq N$,
\[ \wg(\alpha,N) = \int_{\mathcal{U}(\mathbb{C}^N)} \! U_{11} \ldots U_{pp} \overline{U_{1\alpha(1)}} \ldots \overline{U_{p\alpha(p)}} dU   \]
where ${U}$ is a Haar-distributed unitary random matrix on $\mathbb{C}^N$, $dU$ is the normalized Haar
measure, and $\wg$ is called the (unitary) Weingarten function.
	A crucial property of the Weingarten function $\wg(\alpha,N)$ is that it depends only on the conjugacy class (or alternatively, on the cycle structure) of permutation $\alpha$. 
 So, $\wg\left( \alpha,N \right)$ can as well be denoted as $\wg\left(\left[ l_1, l_2, \ldots,l_{\vert C(\alpha)\vert} \right], N  \right)$ where $c_1, c_2,\ldots,c_{ \vert C(\alpha)\vert}$ are cycles of $\alpha$ having lengths $l_1,l_2, \ldots, l_{\vert { C(\alpha)}\vert}$ respectively.
\begin{fact}[General matrix integration~\cite{CB06,Wg78,C03}]
\label{wgt}		

	Let $N$ be a positive integer and $i = (i_1, \ldots, i_p), ~ i'=(i'_1, \ldots, i'_p), \ j = (j_1 ,\ldots, j_p), \ j' =
	(j'_1 , \ldots, j'_p)$ be $p$-tuples of positive integers from $\{1,2,\ldots, N\}$. 
Then,
	\[\int\limits_{\mathcal{U}(\bbC^N)} \! U_{i_1j_1}\cdots U_{i_pj_p} \overline{U_{i'_1j'_1}} \cdots \overline{U_{i'_pj'_p}} dU  = \sum_{\sigma, \tau \in S_p} \delta_{\sigma}(i,i')\delta_{\tau}(j,j')    \wg( \tau \sigma^{-1},N)\]
 where $\delta_{\sigma}(i,i') = \delta_{i_1i'_{\sigma(1)}} \delta_{i_2i'_{\sigma(2)}} \ldots \delta_{i_pi'_{\sigma(p)}}$ and $\delta$ is the standard Kronecker delta function.
 
	If $p \ne p'$, then 
	\[ \int\limits_{\mathcal{U}(\bbC^N)} \! U_{i_1j_1} \cdots U_{i_pj_p} \overline{U_{i'_1j'_1}} \cdots \overline{U_{i'_{p'}j'_{p'}}} dU  = 0.\]	
\end{fact}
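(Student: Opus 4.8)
The plan is to recognize the left-hand side as a single entry, in the product basis, of the operator
\[
\Phi_p \;:=\; \int_{\mathcal{U}(\bbC^N)} U^{\otimes p}\otimes \overline{U}^{\otimes p}\,dU
\]
on $(\bbC^N)^{\otimes p}\otimes(\bbC^N)^{\otimes p}$ (indeed $\langle i,i'|\,\Phi_p\,|j,j'\rangle=\int U_{i_1 j_1}\cdots U_{i_p j_p}\,\overline{U_{i'_1 j'_1}}\cdots\overline{U_{i'_p j'_p}}\,dU$), and then to compute $\Phi_p$ by representation theory. First I would dispose of the unbalanced case: since the Haar measure is invariant under $U\mapsto e^{\mathrm{i}\theta}U$, the integrand of $\int U_{i_1 j_1}\cdots U_{i_p j_p}\,\overline{U_{i'_1 j'_1}}\cdots\overline{U_{i'_{p'} j'_{p'}}}\,dU$ is multiplied by $e^{\mathrm{i}(p-p')\theta}$, so averaging over $\theta\in[0,2\pi)$ forces the integral to be $0$ unless $p=p'$.

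So assume $p=p'$ and $p\le N$ (as required by the definition of $\wg$). By left- and right-invariance of the Haar measure, $\Phi_p$ commutes with $W^{\otimes p}\otimes\overline{W}^{\otimes p}$ for every $W\in\mathcal{U}(\bbC^N)$. Under the canonical identification $(\bbC^N)^{\otimes p}\otimes(\bbC^N)^{\otimes p}\cong\mathrm{End}\big((\bbC^N)^{\otimes p}\big)$ for which $W^{\otimes p}\otimes\overline{W}^{\otimes p}$ acts by conjugation $X\mapsto W^{\otimes p}X(W^{\otimes p})^{\dagger}$, the operator $\Phi_p$ becomes the averaging map $X\mapsto\int_{\mathcal{U}(\bbC^N)} W^{\otimes p}X(W^{\otimes p})^{\dagger}\,dW$, which is a self-adjoint idempotent (with respect to the Hilbert--Schmidt inner product) projecting onto its fixed-point space, namely the commutant of $\{W^{\otimes p}:W\in\mathcal{U}(\bbC^N)\}$. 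By Schur--Weyl duality this commutant is $\mathrm{span}\{P_\sigma:\sigma\in S_p\}$, where $P_\sigma$ permutes the $p$ tensor factors, and for $p\le N$ the $P_\sigma$ are moreover linearly independent, hence a basis. Therefore $\Phi_p$ is the orthogonal projection onto this span, which in terms of the Gram matrix $G_{\sigma\tau}:=\langle P_\sigma,P_\tau\rangle_{\mathrm{HS}}=\Tr(P_{\sigma^{-1}\tau})=N^{\vert C(\sigma^{-1}\tau)\vert}$ (invertible since the $P_\sigma$ form a basis) can be written as $\Phi_p=\sum_{\sigma,\tau\in S_p}(G^{-1})_{\sigma\tau}\,\ketbra{P_\sigma}{P_\tau}$. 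Since $G$ depends only on the cycle type of $\sigma^{-1}\tau$, so does $G^{-1}$, and identifying $(G^{-1})_{\sigma\tau}$ with $\wg(\tau\sigma^{-1},N)$ recovers the claimed class function (this also recovers the integral definition of $\wg$ stated above, as the special case $i_k=j_k=k$, $i'_k=j'_k=\alpha(k)$).

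It then remains to read off entries: under the identification above, the $(i,i',j,j')$ entry of $\ketbra{P_\sigma}{P_\tau}$ is $\big(\prod_{k=1}^{p}\delta_{i_k i'_{\sigma(k)}}\big)\big(\prod_{k=1}^{p}\delta_{j_k j'_{\tau(k)}}\big)$ (after the harmless relabeling $\sigma,\tau\mapsto\sigma^{-1},\tau^{-1}$ in the sum, using that $\wg$ is a class function and every permutation is conjugate to its inverse), and summing over $\sigma,\tau$ gives exactly the right-hand side. The main obstacle, and the only genuinely nontrivial input, is the representation-theoretic heart of the second paragraph: setting up the $\mathrm{End}\big((\bbC^N)^{\otimes p}\big)$ picture so that $\Phi_p$ is literally the conjugation-averaging projection, invoking Schur--Weyl duality for the commutant of $\mathcal{U}(\bbC^N)^{\otimes p}$, and using linear independence of the $P_\sigma$ for $p\le N$ to make the Gram-matrix inversion well defined; everything else is bookkeeping with Kronecker deltas. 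Alternatively, as the attribution indicates, one may simply cite \cite{CB06,Wg78,C03}.
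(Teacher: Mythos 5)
Your proposal is correct. The paper itself gives no proof of this statement: it is stated as a Fact with citations to \cite{CB06,Wg78,C03}, so there is no in-paper argument to compare against. Your derivation is the standard one (essentially that of Collins and \'Sniady): reduce to computing the conjugation-averaging projection on $\mathrm{End}\bigl((\bbC^N)^{\otimes p}\bigr)$, identify its image with the commutant $\mathrm{span}\{P_\sigma:\sigma\in S_p\}$ via Schur--Weyl duality, and invert the Gram matrix $G_{\sigma\tau}=N^{\vert C(\sigma^{-1}\tau)\vert}$. All the bookkeeping checks out: the phase-averaging argument disposes of $p\neq p'$; the projection formula $\Phi_p=\sum_{\sigma,\tau}(G^{-1})_{\sigma\tau}\ketbra{P_\sigma}{P_\tau}$ is valid because, for $p\le N$, the $P_\sigma$ are linearly independent so $G$ is invertible; and the final relabeling $\sigma,\tau\mapsto\sigma^{-1},\tau^{-1}$ is harmless since $\tau^{-1}\sigma=(\sigma^{-1}\tau)^{-1}$ has the same cycle type as $\tau\sigma^{-1}$ and $\wg$ is a class function. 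The one thing worth flagging is that your argument, as written, requires $p\le N$ (needed for linear independence of the $P_\sigma$ and hence invertibility of $G$); the paper's statement of the Fact does not make this restriction explicit, but it is implicit in the definition of $\wg$ it adopts and in how the Fact is subsequently applied, so this is a presentational rather than a substantive gap.
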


The following result encloses all the information we need for our computations about the asymptotics of the $\wg$ function; see \cite{C03}  for a proof.

\begin{fact} [Asymptotics of Weingarten functions (Section 2.6.3, \cite{Gu13})]
For $\sigma \in S_{t}$, 
\begin{equation}\label{wg_asym}\wg (\sigma, N) = \clO \left(\dfrac{N^{\vert C(\sigma)\vert}}{N^{2t}}\right) \hspace{2cm} \text{ as } N \to \infty \;.
\end{equation}
\end{fact}

 \begin{fact}[Proposition~2.4, \cite{Gu13}]
For all $t \geq 1$, 
\begin{equation}\label{wg_sum} \sum_{\sigma \in S_{t}}\wg (\sigma, N) = \frac{1}{N(N+1) \cdots (N+t-1)} \;.
\end{equation} \end{fact}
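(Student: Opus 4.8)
The plan is to read off the identity from a single Haar matrix integral evaluated in two ways, namely from
\[ I_t := \int_{\clU(\bbC^N)} |U_{11}|^{2t}\, dU = \int_{\clU(\bbC^N)} U_{11}^{\,t}\,\overline{U_{11}^{\,t}}\, dU. \]

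First I would apply Fact~\ref{wgt} with $p=t$ and all row/column indices set equal to $1$, i.e. $i=i'=j=j'=(1,\dots,1)$. Every Kronecker delta in the formula is then $\delta_{1,1}=1$, so the sum collapses to $I_t=\sum_{\sigma,\tau\in S_t}\wg(\tau\sigma^{-1},N)$. Writing $\pi=\tau\sigma^{-1}$ and noting that for each fixed $\pi\in S_t$ there are exactly $t!$ pairs $(\sigma,\tau)$ with $\tau\sigma^{-1}=\pi$, this yields $I_t=t!\sum_{\pi\in S_t}\wg(\pi,N)$.

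Second I would evaluate $I_t$ by representation theory. Consider the operator $T:=\int_{\clU(\bbC^N)} U^{\otimes t}\,\state{e_1^{\otimes t}}\,(U^\dagger)^{\otimes t}\, dU$ on $(\bbC^N)^{\otimes t}$. Translation invariance of the Haar measure shows that $T$ commutes with $V^{\otimes t}$ for every $V\in\clU(\bbC^N)$; moreover, since the symmetric subspace $\mathrm{Sym}^t(\bbC^N)$ is $U^{\otimes t}$-invariant and contains $\ket{e_1^{\otimes t}}$, the operator $T$ is supported on $\mathrm{Sym}^t(\bbC^N)$. Because $\mathrm{Sym}^t(\bbC^N)$ is irreducible under the diagonal action of $\clU(\bbC^N)$, Schur's lemma forces $T=c\,\Pi_{\mathrm{sym}}$ for a scalar $c$, and taking traces gives $c=1/\dim\mathrm{Sym}^t(\bbC^N)=\binom{N+t-1}{t}^{-1}$. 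Sandwiching $T$ between $\bra{e_1^{\otimes t}}$ and $\ket{e_1^{\otimes t}}$: the left side equals $\int U_{11}^{\,t}\,\overline{U_{11}^{\,t}}\,dU=I_t$, while the right side equals $c\,\bra{e_1^{\otimes t}}\Pi_{\mathrm{sym}}\ket{e_1^{\otimes t}}=c$, since $\ket{e_1^{\otimes t}}$ is already symmetric. Combining, $t!\sum_{\pi\in S_t}\wg(\pi,N)=I_t=\binom{N+t-1}{t}^{-1}=\dfrac{t!}{N(N+1)\cdots(N+t-1)}$, and dividing by $t!$ finishes the proof.

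The only substantive point is the irreducibility of $\mathrm{Sym}^t(\bbC^N)$ under the diagonal unitary action, which is exactly what licenses the use of Schur's lemma; everything else is index bookkeeping. If one prefers to sidestep representation theory, the same identity also follows elementarily: the first column of a Haar-random unitary is a uniformly random unit vector of $\bbC^N$, so $|U_{11}|^2\sim\mathrm{Beta}(1,N-1)$ and $I_t=\bbE\big[|U_{11}|^{2t}\big]=\frac{t!\,(N-1)!}{(N+t-1)!}$ by the Beta moment formula; equating this with the Weingarten expansion of $I_t$ above gives the claim.
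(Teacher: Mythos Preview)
The paper does not prove this statement; it is quoted as a fact from \cite[2.4]{Gu13} without argument. Your proof is correct and self-contained: the Weingarten expansion of $I_t=\int|U_{11}|^{2t}\,dU$ indeed collapses to $t!\sum_{\pi}\wg(\pi,N)$, and the Schur's-lemma computation on the symmetric subspace (or the Beta$(1,N-1)$ moment formula) correctly evaluates $I_t=\binom{N+t-1}{t}^{-1}$.

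For context, immediately after this Fact the paper \emph{does} prove the companion identity $\sum_{\sigma\in S_t}|\wg(\sigma,N)|=\dfrac{1}{N(N-1)\cdots(N-t+1)}$ (Lemma~\ref{wg_modsum}) by a rather different route: it uses that the inverse of $\wg$ in the group algebra $\bbC[S_t]$ factors as $G=\prod_{k=1}^{t}(N+J_k)$ with $J_k$ the Jucys--Murphy element, and then applies the sign representation to both sides. Running that same argument with the \emph{trivial} representation in place of the sign representation (so each $J_k$ contributes $k-1$ rather than $-(k-1)$) immediately yields $\sum_{\sigma}\wg(\sigma,N)=\big(\prod_{k=1}^{t}(N+k-1)\big)^{-1}$, which is the Fact you were asked to prove and is presumably what the cited reference does. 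Your approach trades this group-algebra machinery for a direct Haar integral and the irreducibility of $\mathrm{Sym}^t(\bbC^N)$; it is more hands-on, and the Beta-moment variant is genuinely elementary. Either method is perfectly adequate here.
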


Other than the sum of the Weingarten function, one more quantity that will be important for us is its $L_1$ norm. 
Here, we derive a useful expression for that.   
\begin{lemma}\label{wg_modsum}	
	For all $t \geq 1$, 	
\begin{equation} \sum_{\sigma \in S_{t}} \vert \wg (\sigma, N) \vert = \frac{1}{N(N-1) \cdots (N-(t-1))} \;.
\end{equation}	
\end{lemma}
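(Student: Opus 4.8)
The goal is to evaluate $\sum_{\sigma \in S_t} |\wg(\sigma, N)|$ and show it equals $\frac{1}{N(N-1)\cdots(N-(t-1))}$. The natural starting point is the companion identity \eqref{wg_sum}, namely $\sum_{\sigma \in S_t} \wg(\sigma, N) = \frac{1}{N(N+1)\cdots(N+t-1)}$, and the observation that the Weingarten function has a predictable sign pattern: since $\wg(\sigma, N) = (-1)^{N-|C(\sigma)|}\prod_{c \in C(\sigma)} \wg(c,N)(1+O(N^{-2}))$ and each cycle contributes a sign $(-1)^{|c|-1}$, the sign of $\wg(\sigma,N)$ is $(-1)^{t - |C(\sigma)|} = (-1)^{\mathsf{T}(\sigma)}$, i.e. the sign of the permutation $\sigma$ itself (recall $\mathsf{T}(\sigma) + |C(\sigma)| = t$). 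So the plan is to write $|\wg(\sigma,N)| = \mathrm{sgn}(\sigma)\,\wg(\sigma,N)$ and recognize that $\sum_{\sigma} \mathrm{sgn}(\sigma)\wg(\sigma,N)$ is a known Weingarten-type sum.

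First I would make the sign claim precise: I would argue that $\mathrm{sgn}(\sigma)\,\wg(\sigma, N) = |\wg(\sigma,N)|$ for all $\sigma \in S_t$ and all $N \geq t$. This follows from the exact product formula for $\wg$ over cycles together with the fact that $\wg((1,2,\ldots,d), N) = (-1)^{d-1} c_{d-1} \prod_{-d+1 \le j \le d-1} \frac{1}{N-j}$, where the product $\prod_j \frac{1}{N-j}$ is positive for $N \geq d$ (actually for $N > d-1$), so the sign of each cycle factor is exactly $(-1)^{d-1}$; multiplying over the cycles of $\sigma$ gives overall sign $(-1)^{\sum_c (|c|-1)} = (-1)^{t-|C(\sigma)|} = \mathrm{sgn}(\sigma)$. (One must be slightly careful about whether the $(1+O(N^{-2}))$ correction can flip a sign, but the cleaner route is to use the exact recursive/determinantal definition of $\wg$ rather than the asymptotic one; I would cite the exact expression from \cite{CB06,Gu13}.)

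Next, the identity $\sum_{\sigma \in S_t} \mathrm{sgn}(\sigma)\, \wg(\sigma, N) = \frac{1}{N(N-1)\cdots(N-(t-1))}$ is the ``signed'' analogue of \eqref{wg_sum}. The slickest derivation: $\sum_\sigma \wg(\sigma,N)$ arises as $\frac{1}{d_N} \sum_\sigma \wg(\sigma,N) \chi_{\text{triv}}$-type contraction, or more concretely by applying Fact~\ref{wgt} to $\int |U_{11} U_{22}\cdots U_{tt}|^2\,dU$ with all indices equal-ish; the signed sum instead comes from contracting against the sign representation, e.g. from $\int \det(U_{[t]})\, \overline{\det(U_{[t]})}\, dU$ where $U_{[t]}$ is the top-left $t\times t$ block, which expands via the Leibniz formula into $\sum_{\pi,\rho} \mathrm{sgn}(\pi)\mathrm{sgn}(\rho) \int \prod U_{i\pi(i)}\overline{U_{i\rho(i)}}\,dU = t!\sum_\sigma \mathrm{sgn}(\sigma)\wg(\sigma,N)$, and this integral can be computed directly (it is the probability-like quantity $\binom{N}{t}^{-1}\cdot$something, giving $\frac{t!}{N(N-1)\cdots(N-t+1)}$). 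Alternatively one can derive it purely combinatorially from the orthogonality-of-Weingarten relation $\sum_{\tau} \wg(\sigma\tau^{-1},N)\,\mathrm{Wg}^{-1}$-style identities, but the determinant route is cleanest.

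\textbf{Main obstacle.} The delicate point is the uniform sign statement $\mathrm{sgn}(\sigma)\wg(\sigma,N) = |\wg(\sigma,N)|$ for \emph{all} $N \ge t$ and all $\sigma$, not just asymptotically; the asymptotic formula in the excerpt carries an $O(N^{-2})$ factor that a priori could perturb signs for small $N$, so I expect the real work is to invoke (or re-derive) the exact formula for $\wg$ and verify positivity of the relevant products for $N \geq t$. Once that sign identity is nailed down, the rest is just recognizing/applying the signed-sum evaluation, which is short. If re-deriving the exact sign of $\wg$ turns out to be heavier than desired, a fallback is to prove the signed-sum identity $\sum_\sigma \mathrm{sgn}(\sigma)\wg(\sigma,N) = \prod_{j=0}^{t-1}(N-j)^{-1}$ first (via the determinant integral, which needs no sign knowledge), observe its right-hand side is positive, and then compare term-by-term magnitudes with $\sum_\sigma \wg(\sigma,N)$ using the fact that $|\wg(\sigma,N)|$ depends only on cycle type — but the direct sign argument is more transparent and is what I would present.
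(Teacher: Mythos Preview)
Your plan is correct and shares its skeleton with the paper's proof: both reduce the $L_1$ sum to the signed sum $\sum_{\sigma}\mathrm{sgn}(\sigma)\,\wg(\sigma,N)$ via the sign identity $\mathrm{sgn}(\sigma)\,\wg(\sigma,N)=|\wg(\sigma,N)|$, and then evaluate that signed sum. The difference is in how the signed sum is computed. The paper works in the group algebra: it uses that the convolution inverse of $\wg$ is the Jucys--Murphy product $G=\prod_{k=1}^{t}(N+J_k)$, applies the sign representation $\rho_{\mathrm{sign}}$ (which sends each transposition in $J_k$ to $-1$, hence $\rho_{\mathrm{sign}}(N+J_k)=N-(k-1)$), and inverts to get $\rho_{\mathrm{sign}}(\wg)=\prod_{k=0}^{t-1}(N-k)^{-1}$. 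Your route instead evaluates the Haar integral $\int |\det U_{[t]}|^{2}\,dU$ two ways: Leibniz plus Fact~\ref{wgt} gives $t!\sum_{\sigma}\mathrm{sgn}(\sigma)\wg(\sigma,N)$, while a direct computation gives $t!/\bigl(N(N-1)\cdots(N-t+1)\bigr)$. The Jucys--Murphy argument is shorter and purely algebraic once you accept the factorization of $G$; your determinant argument stays closer to the integral machinery already set up in Section~\ref{wuc} but requires an independent evaluation of $\mathbb{E}\,|\det U_{[t]}|^{2}$ (which is standard but should be cited or derived, e.g.\ via the distribution of squared volumes of random orthonormal $t$-frames projected onto a fixed $t$-plane). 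You are also more careful than the paper about the sign claim $\mathrm{sgn}(\sigma)\wg(\sigma,N)\ge 0$: the paper uses it implicitly when equating $\rho_{\mathrm{sign}}(\wg)$ with $\sum_\sigma|\wg(\sigma,N)|$, whereas you flag it and correctly note that one should invoke the exact cycle-product formula for $\wg$ rather than the asymptotic one.
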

\begin{proof}
	Let $\rho_{sign}$ denote the sign representation of the symmetric group. 
	Let $G$ denote the inverse of $\wg$ in $\mathbb{C}\left[S_t\right]$.
	It is well known that $G= \prod\limits_{k=1}^{t}(N+J_k)$ where $J_k$ is $k$-th \emph{Jucys-Murphy element}, defined as follows:	
\[ J_k= \left(1,2 \right) + \left(2,3 \right) + \cdots+ \left(k-1,k \right). \]		
Now $G= \prod\limits_{k=1}^{t}(N+J_k)$	gives
$$\rho_{sign}\left(G\right)= \rho_{sign}\left( \prod\limits_{k=1}^{t}(N+J_k)\right).$$
Inverting both sides, 
\begin{align*}
\rho_{sign}\left( \wg\right)&= \left(\rho_{sign}\left(\prod\limits_{k=1}^{t}(N+J_k)\right)\right)^{-1} \\
&= \left(\prod\limits_{k=1}^{t} \rho_{sign}(N+J_k)\right)^{-1}\\
&= \left( \prod\limits_{k=1}^{t}(N-(k-1))\right)^{-1}\\
&= \frac{1}{N (N-1) (N-2) (N-(t-1))} \,. & \qedhere
\end{align*}
	\end{proof}
We give some values of the Weingarten functions for the unitary group $\mathcal{U}(\bbC^N)$ taken from \cite{C03} upto third moments. 
\begin{align*}
\wg( [1], N) = \dfrac{1}{N} \;,  &  \hspace{1.6cm}
\wg( [1,1], N) = \dfrac{1}{N^2-1} \;,  \\
\wg( [2], N) = \dfrac{-1}{N(N^2-1)} \;, &  \hspace{1.6cm}
 \wg( [1,1,1], N) = \dfrac{N^2-2}{N(N^2-1)(N^2-4)} \;,  \\
 \wg( [2,1], N) = \dfrac{-1}{(N^2-1)(N^2-4)} \;, &   \hspace{1.6cm}
\wg( [3], N) = \dfrac{2}{N(N^2-1)(N^2-4)}\;. 
\end{align*}

\section{A Warm-up: Quantum tamper detection codes for classical messages}\label{cqtdc}
In this section, we consider quantum tamper detection codes for classical messages. We give a probabilistic proof that quantum tamper detection codes for classical messages exist.
\begin{theorem}\label{mainthm}
Let $\mathcal{U}_{\mathsf{Adv}}$ be an $\left(N,\alpha, \sqrt{\frac{\eps}{2K}}\right)$ family such that $ \left( \frac{1}{6} - \alpha\right) \log(N)  \geq \log(K) + \log\left( \frac{1}{\eps}\right)+ 2$.
Then there exists a $(K, N, \eps)$-tamper secure scheme for classical messages.
Furthermore, a uniformly random encoding and decoding strategy according to Haar measure \emph{(}see $(\enc, \dec_{\mathsf{Cl}})$ in Definition~\ref{def:random_haar_enc_dec}\emph{)} gives such a code with probability at least $1-  \mathcal{O} \left(\dfrac{K N}{ 2^{N^{\alpha}}}\right)$. 
\end{theorem}
\begin{proof}	
	We show that an encoding and decoding strategy, as given in Definition~\ref{def:random_haar_enc_dec}, gives a tamper detection code for the given set of parameters. 

	For a fixed unitary $U \in \mathcal{U}_{\mathsf{Adv}}$, let us define random variables $X_{js} = \vert \bra{\psi_j}U\ket{\psi_s}\vert^2$  {for $j,s \in \mathcal{M}$}.
	 Here the randomness is over the Haar measure in choosing $(\enc, \dec )$ strategy as an isometry $V$.
	 Let $X_s = \sum_{j \ne s}  X_{js}.$ 
	 The random variable $X_{js}$ denotes the probability that message $j$ was decoded given that message $s$ was encoded.
	 Similarly, $X_s$ denotes the probability that the procedure resulted in an incorrectly decoded message.  
	 Both $X_{js}$ and $X_s$ are non-negative random variables with values less than or equal to $ 1$.
	 
	Let $\clE$ be the event that  $(\enc, \dec)$  is not an $\eps$-secure tamper detection code against  $\mathcal{U}_{\mathsf{Adv}}$. Then, 	
	\begin{align*}
		\Pr\left(\clE   \right) & \leq \Pr\left(   \exists  U \in \mathcal{U}_{\mathsf{Adv}}, s \in \{ 0,1 \}^k  \quad \text{s.t.}  \quad X_s+X_{ss} \geq \eps \right)\\
		& \leq \sum_{ U \in \mathcal{U}_{\mathsf{Adv}}} \sum_{s \in \{ 0,1 \}^k} \Pr\left(  X_s +X_{ss}  \geq \eps  \right)  \\
		& \leq \sum_{ U \in \mathcal{U}_{\mathsf{Adv}}} \sum_{s \in \{ 0,1 \}^k} \sum_{j \in \{ 0,1 \}^k \ne s}  \Pr\left(  X_{js}  \geq \frac{\eps}{K}   \right) +\sum_{ U \in \mathcal{U}_{\mathsf{Adv}}} \sum_{s \in \{ 0,1 \}^k} \Pr\left(  X_{ss}  \geq \frac{\eps}{K}  \right)  \\
		& \leq  \vert \mathcal{U}_{\mathsf{Adv}} \vert K^2 \Pr\left( X_{js}  \geq \frac{\eps}{K}   \right) + \vert \mathcal{U}_{\mathsf{Adv}} \vert K \Pr\left(  X_{ss}  \geq \frac{\eps}{K}    \right) \\
		& =  \vert \mathcal{U}_{\mathsf{Adv}} \vert K^2 	\Pr\left(\clE_1   \right) + \vert \mathcal{U}_{\mathsf{Adv}} \vert K 	\Pr\left(\clE_2   \right). \\
	\end{align*}
	To bound  $\Pr\left(\clE_1   \right)=\Pr\left(  X_{js}  \geq \frac{\eps}{K}   \right) $ and  $\Pr\left(\clE_2   \right)=\Pr\left(  X_{ss}  \geq  \frac{\eps}{K}   \right)$ using a Chernoff-like argument, we need to calculate moments of random variable $X_{js}$ and $X_{ss}$.	
	Note that we could not directly use Chernoff bound to bound $\sum_j X_{js}$ as for different $j_1 \neq j_2$, the random variables $X_{j_{1}s}$ and $X_{j_{2}s}$ are not independent of each other.
	Naturally, the problem of calculating moments of random variable $X_{js}$ is closely related to Weingarten unitary calculus (see Section~\ref{wuc}) as our encoding strategy is Haar random.

Here we present first-order moments for variables $X_{js}$ and $X_{ss}$. 
Computation for higher moments is similar but slightly more involved and can be found in the Appendix~\ref{sec:app_classical}.

For readability we use $\phi = \sqrt{\frac{\eps}{2K}}$. 
Thus, $\mathcal{U}_{\mathsf{Adv}}$ is an $(N, \alpha, \phi)$ family.

\vspace{0.4cm}
 
	\textbf{First moment of random variable $X_{js}$ and $X_{ss}$:} 
	
	We begin with the first moment of $X_{js}$. 
	\begin{align*}
		X_{js} & = |\bra{\psi_j}U\ket{\psi_s}|^2\\
		& = \bra{\psi_j}U\ket{\psi_s}\bra{\psi_s}U^\dagger \ket{\psi_j} \\
		& = \bra{j}V^\dagger UV\ket{s}\bra{s}V^\dagger U^\dagger V\ket{j} \\
		& = \left(\sum_{l_1,k_1}U_{l_1k_1} V^*_{l_1j}V_{k_1s}\right) \left(\sum_{l_2 ,k_2}U^\dagger_{l_2k_2} V^*_{l_2s}V_{k_2j} \right) \\
		& = \sum_{l_1,k_1}\sum_{l_2,k_2}\left(U_{l_1k_1} U^\dagger_{l_2k_2} V_{k_1s}V_{k_2j} V^*_{l_1j}V^*_{l_2s}\right).
	\end{align*}
	
	\noindent\textbf{A.} When $j \ne s$,
	\begin{align*}
		\E[X_{js}]& = \E[|\bra{\psi_j}U\ket{\psi_s}|^2]\\
		& = \sum_{l_1,k_1}\sum_{l_2,k_2}\left(U_{l_1k_1} U^\dagger_{l_2k_2} \E\left[V_{k_1s}V_{k_2j} V^*_{l_1j}V^*_{l_2s}\right]\right) \\ 
		& = \sum_{l_1,k_1}\sum_{l_2,k_2}\left(U_{l_1k_1} U^\dagger_{l_2k_2}   \left( \sum_{\alpha, \beta \in S_2} \delta_{\alpha}(k_1k_2,l_1l_2)\delta_{\beta}(sj,js) \wg(\beta \alpha^{-1},N)\right) \right).
	\end{align*} 
	The final equality is due to Fact~\ref{wgt}.
	Note that when $j \ne s$ and $\beta= \id$, we get, $\delta_{\beta}(sj,js) =0$.
	Thus, the only terms that survive are those corresponding to $\beta= \left(1 \ 2 \right)$. 
	
	\begin{align*}
		\E[X_{js}]& = \sum_{l_1,k_1,l_2,k_2}U_{l_1k_1} U^\dagger_{l_2k_2}   \left( \delta(k_1k_2,l_1l_2) \wg((1\ 2) ((1)(2))^{-1},N) + \delta(k_1k_2,l_2l_1) \wg((1\ 2) (1\ 2)^{-1},N)  \right)\\
		& = \sum_{l_1=k_1}\sum_{l_2=k_2}U_{l_1k_1} U^\dagger_{l_2k_2}  \wg((1\ 2),N) +   \sum_{l_1=k_2}\sum_{l_2=k_1}U_{l_1k_1} U^\dagger_{l_2k_2} \wg((1)(2),N)  \\ 
		& = \Tr(U)  \Tr(U^\dagger)\cdot \wg((1\ 2),N) +   \Tr(UU^\dagger) \cdot \wg((1)(2),N)  \\ 
		& =\frac{- \Tr(U)  \Tr(U^\dagger) }{N(N^2-1)}+  N \cdot \frac{1}{N^2-1} \\ 
		& =\frac{ N^2- \Tr(U)  \Tr(U^\dagger) }{N(N^2-1)}\\ 
		& =\frac{ N^2- \vert\Tr(U)\vert^2  }{N(N^2-1)}\\ 
		& \leq \frac{ 2 }{N} \;.
	\end{align*}

 \noindent \textbf{B.} When $j = s$,
	\begin{align*}
		\E[X_{ss}]& = \E[|\bra{\psi_s}U\ket{\psi_s}|^2] \nonumber \\
		& = \sum_{l_1,k_1}\sum_{l_2,k_2}\left(U_{l_1k_1} U^\dagger_{l_2k_2} \E\left[V_{k_1s}V_{k_2s} V^*_{l_1s}V^*_{l_2s}\right]\right) \nonumber \\ 
		& = \sum_{l_1,k_1}\sum_{l_2,k_2}\left(U_{l_1k_1} U^\dagger_{l_2k_2}   \left( \sum_{\alpha, \beta \in S_2} \delta_{\alpha}(k_1k_2,l_1l_2)\delta_{\beta}(ss,ss) \wg(\beta \alpha^{-1},N)\right) \right) & (\mbox{from Fact~\ref{wgt}}) \\ 
		& =  \sum_{\alpha \in S_2} \left(\sum_{l_1,k_1,l_2,k_2}U_{l_1k_1} U^\dagger_{l_2k_2}    \delta_{\alpha}(k_1k_2,l_1l_2) \left(\sum_{\beta \in S_2}\wg(\beta \alpha^{-1},N)\right) \right) \nonumber \\
		& =  \sum_{\alpha \in S_2} \left(\sum_{l_1,k_1,l_2,k_2}U_{l_1k_1} U^\dagger_{l_2k_2}    \delta_{\alpha}(k_1k_2,l_1l_2) \left( \frac{1}{N(N+1)}\right) \right)& (\mbox{from eq.~\eqref{wg_sum}}) \\
		& = \frac{1}{N(N+1)}\left(\sum_{l_1=k_1}\sum_{l_2=k_2}U_{l_1k_1} U^\dagger_{l_2k_2}  +   \sum_{l_1=k_2}\sum_{l_2=k_1}U_{l_1k_1} U^\dagger_{l_2k_2}\right) \nonumber  \\ 
		& = \frac{\Tr(U)  \Tr(U^\dagger) +   \Tr(UU^\dagger) }{N(N+1)}  \\ 
			& =  \frac{ N+ \vert \Tr(U)\vert^2 }{N(N+1)} \nonumber \leq \phi^2 + \frac{1}{N} & (\mbox{since $\vert\Tr(U) \vert \leq \phi N$}).
	\end{align*}

Thus, we get the following bounds:
\begin{gather*}
     \E\left[X_{js}\right] \leq \frac{2}{N} \quad  and \quad 
    \E[X_{ss}] \leq \phi^2+ \frac{1}{N} \;.
\end{gather*}

Similarly, we get higher moment bounds (see Appendix~\ref{sec:app_classical});
\begin{gather}
    \E\left[X_{js}^{t}\right] \leq\mathcal{O}\left( \frac{t^4}{N}\right)
^t \quad and \quad \E[X_{ss}^{t}] \leq \mathcal{O}{\left( \left( \frac{t^2}{N}\right)^t + t \phi^{2t} \right)}. \label{claim:t-moments_classical}   \end{gather}

Now we proceed to bound the probability $ \Pr\left(  X_{js}  \geq  \frac{\eps }{K}  \right).$
	\begin{align}
		\Pr\left(  X_{js}  \geq  \frac{\eps }{K}   \right) & \leq \Pr\left(  e^{ \theta X_{js}}  \geq e^{ \frac{\theta \eps }{K}   }  \right) \nonumber  \\
		& \leq \frac{\E[  e^{ \theta X_{js}} ] }{e^{ \frac{\theta \eps }{K}  }  } \nonumber  \\
		& =  \frac{ 1 }{e^{ \frac{\theta \eps }{K}  } }  \sum_i  \frac{ \theta^i \E[X_{js}^i]}{i! } \nonumber  \\
		& = \frac{ 1 }{e^{ \frac{\theta \eps }{K}  } }  \left( \sum_{i=0}^{N^{1/5}}  \frac{\theta^i   \E( X_{js}^i)}{i! }+  \sum_{i \geq N^{1/5}+1} \frac{ \theta^i  \E(  X_{js}^i)}{i! }\right) \nonumber  \\
		& \leq \frac{ 1 }{e^{ \frac{\theta \eps }{K}    } } \left( \sum_{i=0}^{N^{1/5}}  \frac{ \theta^i}{i! } \clO \left(\frac{ i^4 }{N }\right)^i+  \sum_{i \geq N^{1/5}+1} \frac{\theta^i  \E(X_{js}^i)}{i! }\right)   &  (\mbox{from eq.~\eqref{claim:t-moments_classical}}) \nonumber \\
		& \leq \frac{ 1 }{e^{ \frac{\theta \eps }{K}  } } \left( \sum_{i=0}^{N^{1/5}}  \frac{ 1}{i! } \clO \left(\frac{ \theta}{N^{1/5} }\right)^i+  \sum_{i \geq N^{1/5}+1} \frac{\theta^i  \E(X_{js}^i)}{i! }\right) \nonumber  \\
		& \leq \frac{ 1 }{e^{ \frac{\theta \eps }{K} } } \left( \sum_{i=0}^{N^{1/5}}  \frac{ 1}{i! } \clO \left(\frac{ \theta}{N^{1/5} }\right)^i+  \sum_{i \geq N^{1/5}+1} \frac{\theta^i  \E(  X_{js}^{N^{1/5}})}{i! }\right) \nonumber  \\
		& \leq    \frac{ 1 }{e^{ \frac{\theta \eps }{K}  } }  \left( \clO \left(e^{ \frac{\theta}{N^{1/5}}} \right)+  \clO \left(\frac{e^\theta}{ N^{\frac{1}{5}N^{1/5}}  }\right)\right) \nonumber  \\
		& \leq    \frac{ 1 }{e^{ \frac{\theta \eps }{K} } } \clO \left( e^{ \frac{\theta}{N^{1/5}}} +  e^{\theta - \frac{1}{5}N^{1/5} \ln(N)  } \right) \nonumber  \\
		\nonumber &  \leq    \frac{ 1 }{e^{ \frac{ N^{1/6} \eps }{K} } } \clO \left( 1 +1 \right) & (\mbox{choosing $\theta = N^{\frac{1}{6}}$}) \\
 \label{jsboundeq10} & \leq \mathcal{O} \left( e^{ - \frac{N^{1/6} \eps }{K} }  \right).
	\end{align} 
	
	Similarly when $j=s$, we bound the probability $ \Pr\left(  X_{ss} \geq \frac{\eps}{K}  \right).$
	\begin{align}
		\Pr\left(  X_{ss}  \geq  \frac{\eps }{K}   \right) & \leq \Pr\left(  e^{ \theta X_{ss}}  \geq e^{ \frac{\theta \eps }{K}   }  \right) \nonumber  \\
		& \leq \frac{\E[  e^{ \theta X_{ss}} ] }{e^{ \frac{\theta \eps }{K}  }  } \nonumber  \\
		& =  \frac{ 1 }{e^{ \frac{\theta \eps }{K}  } }  \sum_i  \frac{ \theta^i \E[X_{ss}^i]}{i! } \nonumber  \\
		& = \frac{ 1 }{e^{ \frac{\theta \eps }{K}  } }  \left( \sum_{i=0}^{N^{1/5}}  \frac{\theta^i   \E( X_{ss}^i)}{i! }+  \sum_{i \geq N^{1/5}+1} \frac{ \theta^i  \E(  X_{ss}^i)}{i! }\right) \nonumber  \\
		& \leq \frac{ 1 }{e^{ \frac{\theta \eps }{K}    } }\  \clO\left( \sum_{i=0}^{N^{1/5}}  \frac{ \theta^i}{i! } \left( i\phi^{2i}  + \left( \frac{1}{N^{3/5}} \right)^i\right)+  \sum_{i \geq N^{1/5}+1} \frac{\theta^i  \E(X_{ss}^i)}{i! }\right) \hspace{1.236cm} \nonumber
 \\ \nonumber &\omit\hfill  (\mbox{from eq.~\eqref{claim:t-moments_classical}}) \nonumber \\
		& \leq \frac{ 1 }{e^{ \frac{\theta \eps }{K}  } }\  \clO \left( N\sum_{i=0}^{N^{1/5}}  \frac{ (\theta\phi^{2})^i}{i! } +  \sum_{i=0}^{N^{1/5}}    \frac{ 1}{i! }   \left( \frac{\theta}{N^{3/5}} \right)^i  +    \left(\left(  \frac{1}{N^{3/5}}   \right)^{N^{1/5}}  + N \phi^{N^{1/5}}   \right) e^{\theta }\right) \nonumber  \\
		& \leq   \frac{ 1 }{e^{ \frac{\theta \eps }{K}  } }\ \clO\left(   N e^{\theta\phi^{2}} + e^{ \frac{\theta}{N^{3/5}}} +  e^{\theta   - \frac{3}{5} N^{1/5} \ln(N)  } + e^{\theta +\ln(N)   -N^{1/5} \ln(1/\phi)  } \right) \nonumber \\
 & \leq  e^{ - \frac{ N^{1/6} \eps }{K}  } \ \clO\left(   N e^{N^{\frac{1}{6}}\phi^{2}} + 1 + 1 +1  \right) 
 \tab \tab  \left(\mbox{choosing $\theta = N^{\frac{1}{6}}$}\right) \nonumber \\ & \leq  e^{ - \frac{ N^{ \eps \frac{1}{6}} }{K}  } \ \clO\left(   N e^{ \frac{\eps N^{\frac{1}{6}}}{4K}} + 1 \right) \tab \tab \tab 
 \left( \mbox{\mbox{since $\phi = \sqrt{\frac{\eps}{4K}}$}} \right)
 \nonumber
  \\ 
& \leq  \mathcal{O}\left( N e^{ - \frac{3 \eps N^{\frac{1}{6}}}{4K}} \right)  
  \label{sstboundeq111} 
	\end{align} 

	\begin{flalign}	\vert\mathcal{U}_{\mathsf{Adv}} \vert  K^2\Pr\left(\clE_1   \right)  & = 	\vert\mathcal{U}_{\mathsf{Adv}} \vert K^2 \Pr\left(  X_{js}  \geq \frac{\eps}{K}   \right) \nonumber \\
		& \leq  \vert\mathcal{U}_{\mathsf{Adv}} \vert K^2\ \mathcal{O} \left( e^{ - \frac{N^{1/6} \eps }{K} }  \right) &(\mbox{from eq.~\eqref{jsboundeq10}}) \nonumber\\
		& \leq  2^{N^{\alpha}} K ^2   \mathcal{O} \left( e^{ - 4 N^{\alpha} } \right)   \nonumber \\
  & \leq   K ^2   \mathcal{O} \left( 2^{N^{\alpha}} 2^{ - 4 N^{\alpha} } \right)   \nonumber \\
				\label{e1eq4}& \leq \mathcal{O}\left(\dfrac{K^2}{ 2^{N^{\alpha}}}\right). 
	\end{flalign}
	The third inequality follows from the choice of our parameters; 
	\[ \left( \frac{1}{6} - \alpha\right) \log(N)  \geq \log(K) + \log\left( \frac{1}{\eps}\right)+ 2.\]

    Similarly, we have, 
	\begin{flalign}
		 \vert \mathcal{U}_{\mathsf{Adv}} \vert K \Pr\left(\clE_2   \right) & = \vert \mathcal{U}_{\mathsf{Adv}} \vert K \Pr\left(  X_{ss}  \geq \frac{\eps}{K}   \right) \nonumber \\
		\nonumber & \leq \vert \mathcal{U}_{\mathsf{Adv}}\vert  K   \mathcal{O}\left( N e^{ - \frac{3 \eps N^{\frac{1}{6}}}{4K}} \right)   & (\mbox{from eq.~\eqref{sstboundeq111}}) \\
		\nonumber & \leq 2^{N^{\alpha}} K  \mathcal{O} \left( N  e^{ - 3 N^{\alpha} } \right)   \\ & \leq  K  \mathcal{O} \left( N 2^{N^{\alpha}}  2^{ - 3 N^{\alpha} } \right)  \nonumber \\ 
		\label{e2eq4}& \leq   \mathcal{O} \left(\dfrac{K N}{ 2^{N^{\alpha}}}\right). 
	\end{flalign}
The third follows from our choice of parameters: $ \left( \frac{1}{6} - \alpha\right) \log(N)  \geq \log(K) + \log\left( \frac{1}{\eps}\right)+ 2$.
Thus, it follows from eq.~\eqref{e1eq4}~and~\eqref{e2eq4} that
\begin{align*}
	\Pr\left(\clE   \right) & \leq  \vert \mathcal{U}_{\mathsf{Adv}} \vert K^2 	\Pr\left(\clE_1   \right) + \vert \mathcal{U}_{\mathsf{Adv}} \vert K 	\Pr\left(\clE_2   \right) \leq  \mathcal{O} \left(\dfrac{K N}{ 2^{N^{\alpha}}}\right). \qedhere
\end{align*}
\end{proof}
\subsection{Relaxed tamper detection for classical messages}\label{Sec:Relaxed_TD}
We would like to point out that an interesting side result follows from our previous calculation. 
It follows that one can get a \emph{relaxed} version of tamper detection even if even when the family $\mathcal{U}_{\mathsf{Adv}}$ does not satisfy the \emph{far from identity} condition. 
Recall that, in the relaxed version, we aim to either output the original message or detect that it was tampered and output $\perp$.
In principle, the relaxed version allows us to revert back to the original message without detecting tampering. 
Such a ``reversion without detection" is inherent to the quantum setting due to the action of measurement operators.
For example, consider a message $m$ encoded as $\ket{\psi}$.
Suppose a unitary takes $\ket{\psi}$ to $\frac{1}{\sqrt{2}} \left( \ket{\psi} + \ket{\psi^\prime}\right)$ where $\ket{\psi^\prime}$ is orthogonal to the space of codewords.  The measurement of the decoder can result in $\ket{\psi^\prime}$ indicating that there was tampering.
If the measurement results in $\ket{\psi}$, we can not detect the tampering, but nonetheless, the decoder still outputs the correct message $\widehat{m}=m$. 
Thus, one gets a qualitatively similar version of tamper detection where the decoder either aborts or returns the correct plaintext.
\begin{theorem}\label{thm:relaxedclassical}
    Let $\mathcal{U}_{\mathsf{Adv}}$ be an $\left(N, \alpha, 1\right)$ family such that $ \left( \frac{1}{6} - \alpha\right) \log(N)  \geq \log(K) + \log\left( \frac{1}{\eps}\right)+ 2$. Then a uniform Haar random encoding-decoding strategy is $(K,N,\eps)$-relaxed tamper secure with probability at least $1-  \mathcal{O} \left(\dfrac{K^2}{2^{N^{\alpha}}}\right)$.
\end{theorem}
\begin{proof}For a fixed unitary $U$, recall that random variables were defined as follows: $X_{js} = |\bra{\psi_j}U\ket{\psi_s}|^2$ and $X_s = \sum\limits_{j \ne s}  X_{js}.$ Let $\clE$ be the event that  $(\enc, \dec)$  is not an $\eps$-secure relaxed tamper detection code against  $\mathcal{U}_{\mathsf{Adv}}$.
   \begin{align*}
			\Pr\left(\clE   \right) & \leq \Pr\left(   \exists  U \in \mathcal{U}_{\mathsf{Adv}}, s \in \{ 0,1 \}^k  \quad s.t.  \quad X_s  \geq \eps \right)\\
			& \leq \sum_{ U \in \mathcal{U}_{\mathsf{Adv}}} \sum_{s \in \{ 0,1 \}^k} \Pr\left(  X_s  \geq \eps  \right) \\
			& \leq \sum_{ U \in \mathcal{U}_{\mathsf{Adv}}} \sum_{s \in \{ 0,1 \}^k} \sum_{j \in \{ 0,1 \}^k \ne s}  \Pr\left(  X_{js}  \geq \frac{\eps}{K}   \right) \\
			& \leq  | \mathcal{U}_{\mathsf{Adv}} | K^2 \Pr\left(  X_{js}  \geq \frac{\eps}{K}   \right) \\
			& \leq \mathcal{O} \left(\dfrac{K^2}{2^{N^{\alpha}}}\right) &(\mbox{from eq.~\eqref{e1eq4}}). & \qedhere \\
		\end{align*}		
\end{proof}

\subsection*{From relaxed tamper detection to non-malleability}\label{Sec:nmcodes}
The relaxed form of tamper detection aims to either output the original message, or detect that it was tampered (indicated by the output $\perp$).
On the other hand, a non-malleable code insists that we either output the original message or an unrelated message, but with an additional requirement that the  probability (of a message being the same) depends only on the adversarial unitary $U$.
And hence, it is not a priori clear if relaxed tamper detection will immediately give non-malleable security.
In particular, the probability distribution may depend on $U$, as well as the original message $s$.
However, this potential dependency on $s$ can be removed by first analysing the distribution for an average $s$.
Then, a standard average-case to worst-case reduction shows that non-malleability can be achieved by incurring a nominal hit in the parameters.
This line of argument of first going to an average case setting to remove the dependency on $s$, followed by a reduction to worst case non-malleability is fairly common~(see for example, Section 3.3 in~\cite{BGJR23}). We include it below.

\begin{claim}\label{claimr:average_relaxed_TD}
Let $(\enc,\dec)$ be $\eps$-secure relaxed tamper detection scheme.
    Let $S$ be the uniform distribution on $\mathcal{M} = \lbrace 0,1 \rbrace^{k}$.
    Then, \[\dec \left( U \left( \enc(S)  \enc(S)^\dagger \right) U^\dagger  \right) \approx_{2 \eps} p_U S + (1-p_U)\perp,\]
    where $p_U= \frac{1}{2^k} \sum_s X_{ss}$.
\end{claim}
\begin{proof}
Note that, since $S$ is the uniform distribution,  each $s$ is sampled with probability $\frac{1}{2^k}$, and moreover, any particular $s$ gives back the same $s$ on decoding with probability $p_{\textsf{same}}(s)$, some different $s'$ with $p_{\textsf{diff}}(s)$ and $\perp$ with probability $p_{\perp}(s)$.
And hence, we can represent the relevant distribution as the following convex combination:
\[\dec \left( U \left( \enc(S)  \enc(S)^\dagger \right) U^\dagger  \right) = \frac{1}{2^k} \sum_{s} p_{\textsf{same}}(s) S + \frac{1}{2^k} \sum_{s} p_{\textsf{diff}}(s) S' + \frac{1}{2^k} \sum_{s} p_\perp(s) \perp.\]
Since $(\enc,\dec)$ is $\eps$-secure relaxed tamper detection code, $p_{\textsf{diff}}(s) \leq \eps$, for all $s$.

\noindent Thus, $\frac{1}{2^k} \sum_{s} p_{\textsf{diff}}(s) \leq \eps$.
The claim now follows by noting that $p_{\textsf{same}(s)} = X_{ss}$.
\end{proof}

\begin{theorem} \label{thm:non-malleability}
 Let $\mathcal{U}_{\mathsf{Adv}}$ be an $\left(N, \alpha, 1\right)$ family such that $ \left( \frac{1}{6} - \alpha\right) \log(N)  \geq 2\log(K) + \log\left( \frac{1}{\eps}\right)+ 2$. Then a uniform Haar random encoding-decoding strategy $(\enc,\dec)$ is a $2\eps$-secure non-malleable code (for classical messages against $\mathcal{U}_{\mathsf{Adv}}$) with probability at least $1-  \mathcal{O} \left(\dfrac{K^2}{2^{N^{\alpha}}}\right)$.
\end{theorem}
    \begin{proof}
Let $p_U = \frac{1}{2^k} \sum X_{ss}$ and $\eta = \perp$.
Set $\eps^\prime \leftarrow  \frac{\eps}{K}$\,.

\noindent Then, by choice of parameters, 
 $ \left( \frac{1}{6} - \alpha\right) \log(N)  \geq  \log(K) + \log\left( \frac{1}{\eps^\prime}\right)+ 2$.
 Hence, by Theorem~\ref{thm:relaxedclassical}, a Haar random encoding-decoding is $\eps^\prime$-secure relaxed tamper detection code with probability at $1- \mathcal{O}\left( \frac{K^2}{2^{N^{\alpha}}} \right)$.
Furthermore, by Claim~\ref{claimr:average_relaxed_TD}, 
\begin{equation}\label{eq:average_rtd_in_nonmalleability}
    \dec \left( U \left( \enc(S)  \enc(S)^\dagger \right) U^\dagger  \right) \approx_{2 \eps^\prime} p_U S + (1-p_U)  \perp \,.
\end{equation}
Now,
\begin{flalign*}
    & \Vert \dec \left( U \left( \enc(s)  \enc(s)^\dagger \right) U^\dagger  \right) - p_U s + (1-p_U) \perp\Vert_1 \\   \leq ~ & 2^k \cdot \Vert \dec \left( U \left( \enc(S)  \enc(S)^\dagger \right) U^\dagger  \right) - p_U S + (1-p_U) \perp\Vert_1 \\
     \leq ~ & 2^k \cdot 2 \eps^\prime & \hspace{1.236cm} (\mbox{from eq.~\eqref{eq:average_rtd_in_nonmalleability}})  \\
     \leq ~ &  2 \eps \,.  &  \qedhere 
\end{flalign*}
    \end{proof}

\section{Tamper Detection Codes for Quantum Messages\label{qtdc}}
		In this section, we consider quantum tamper detection codes for quantum messages.
  Again, we give a probabilistic proof that quantum tamper detection codes  exist for quantum messages.
		Our probabilistic methods are similar, but some subtle intricacies are involved for quantum messages due to superposition. 

\begin{theorem}\label{theorem:td_quantum_messages}
    Let $\mathcal{U}_{\mathsf{Adv}}$ be an $\left(N,\alpha, \sqrt{\frac{\eps}{2K}}\right)$ family such that $ \left( \frac{1}{6} - \alpha\right) \log(N)  \geq \log k + \log\left( \frac{1}{\eps}\right)+ 2$ and let $\delta = 2^{2 + \log K - \frac{N^\alpha}{K}}$.
Then a uniformly random Haar encoding and decoding strategy \emph{(}see $(\enc, \dec)$ in Definition~\ref{def:random_haar_enc_dec}\emph{)} is a $(K,N,\eps + \delta)$-tamper secure scheme with probability at least $1-  \mathcal{O} \left(\dfrac{K N}{ 2^{N^{\alpha}}}\right)$. 
\end{theorem}
\begin{proof}

Let 
$ \mathcal{M}=\{ \ket{\theta_1},  \ket{\theta_2} , \ldots ,  \ket{\theta_M}\}$ be a $\delta$-net of $\mathbb{S}^{K-1}$ from Definition~\ref{epsnets} such that $M \leq (\frac{4K}{\delta})^K$ and $\delta = 2^{2 + \log K - \frac{N^\alpha}{K}}$. Let $\vert \theta \rangle$ be an arbitrary quantum message from $\delta$-net.
We express $\theta$ in the computational basis with $a_i$ as coefficients; $\vert \theta \rangle= \sum_{m=1}^{K} a_m \vert m \rangle$.
Recall $\vert \psi_m \rangle = V \vert m \rangle$.

		For $m\in [K]$, let $X_m =\left( \bra{\psi_m}\; U\;  (\enc\ket{\theta})  (\enc\ket{\theta})^\dagger U^\dagger \ket{\psi_m}\right)$ and $X = \sum_{m}X_m$.
		
Note that for a fixed $U$,
\begin{equation}\label{X_m_value_expanded}
    X_{m} = \left(  \sum_{i=1}^K \sum_{j=1}^K a_ia^*_j  \bra{\psi_m} U  \ket{\psi_i} \bra{\psi_j}  U^\dagger \ket{\psi_m}\right).
\end{equation}
Recall that $\Pi$ is a projector on the space of codewords,  that is, $\Pi = \sum\limits_{i=1}^K \ketbra{\psi_i}{\psi_i}$.  

		\begin{flalign*}
			X &= \Tr \left({\Pi}\; U \enc(\ket{\theta}(\enc(\ket{\theta})^{\dagger} U^\dagger\right)
            \\ & =  \sum_{m=1}^K  X_m  \\
			& =  \sum_{m=1}^K  \left( \bra{\psi_m} U \left(  \sum_{i=1}^K \sum_{j=1}^K a_ia^*_j  \ket{\psi_i} \bra{\psi_j}  \right) U^\dagger \ket{\psi_m}\right) & (\mbox{from eq.~\eqref{X_m_value_expanded}})  \\
			& =  \sum_{m=1}^K  \left(  \sum_{i=1}^K \sum_{j=1}^K a_ia^*_j  \bra{\psi_m}\; U  \ket{\psi_i} \bra{\psi_j}  U^\dagger \ket{\psi_m}\right).
		\end{flalign*}		
	Let $\clE$ be the event that $(\enc, \dec)$ is not  $\eps$-secure against $\mathcal{U}_{\mathsf{Adv}}$.
	Again, for bounding the probability of $\mathcal{E}$, we need the higher moments of $X_m$, the calculation of which we defer to Appendix~\ref{sec:app_quantum}.	
	\begin{equation}\label{eq:t_moment_quantum}
	    \E[X^t_{m}] \leq \clO \left(    \left( \frac{  t^2}{N}\right)^{t}  + t \phi^{ 2t}  \right).	\end{equation}
	
	After this, an argument similar to the previous one (breaking sum into two parts; $t \leq N^{1/5}$ and $t> N^{1/5}$ followed by union bound over all messages and accounting for the size of $\vert \mathcal{U}_{\mathsf{Adv}}\vert$) directly can be applied. 
	For completeness, we provide it here.

	\begin{align*}
		\Pr\left(  X_m  \geq  \frac{\eps }{K}   \right) & \leq \Pr\left(  e^{ \theta X_m}  \geq e^{ \frac{\theta \eps }{K}   }  \right) \nonumber  \\
		& \leq \frac{\E[  e^{ \theta X_m} ] }{e^{ \frac{\theta \eps }{K}  }  } \nonumber  \\
		& =  \frac{ 1 }{e^{ \frac{\theta \eps }{K}  } }  \sum_i  \frac{ \theta^i \E[X_m^i]}{i! } \nonumber  \\
		& = \frac{ 1 }{e^{ \frac{\theta \eps }{K}  } }  \left( \sum_{i=0}^{N^{1/5}}  \frac{\theta^i   \E( X_m^i)}{i! }+  \sum_{i \geq N^{1/5}+1} \frac{ \theta^i  \E(  X_m^i)}{i! }\right) \nonumber  \\
		& \leq \frac{ 1 }{e^{ \frac{\theta \eps }{K}    } } \left( \sum_{i=0}^{N^{1/5}}  \frac{ \theta^i}{i! } \clO \left(    \left( \frac{  i^2}{N}\right)^{i}  + i \phi^{ 2i}  \right)
		+  \sum_{i \geq N^{1/5}+1} \frac{\theta^i  \E(X_m^i)}{i! }\right) \hspace{1.236cm}(\mbox{from eq.~\eqref{eq:t_moment_quantum}}) \nonumber \\
		& \leq \frac{ 1 }{e^{ \frac{\theta \eps }{K}    } }\  \clO\left( \sum_{i=0}^{N^{1/5}}  \frac{ \theta^i}{i! } \left( i\phi^{2i}  + \left( \frac{1}{N^{3/5}} \right)^i\right)+  \sum_{i \geq N^{1/5}+1} \frac{\theta^i  \E(X_{ss}^i)}{i! }\right)   \nonumber \\
		& \leq \frac{ 1 }{e^{ \frac{\theta \eps }{K}  } }\  \clO \left( N\sum_{i=0}^{N^{1/5}}  \frac{ (\theta\phi^{2})^i}{i! } +  \sum_{i=0}^{N^{1/5}}    \frac{ 1}{i! }   \left( \frac{\theta}{N^{3/5}} \right)^i  +    \left(\left(  \frac{1}{N^{3/5}}   \right)^{N^{1/5}}  + N \phi^{N^{1/5}}   \right) e^{\theta }\right) \nonumber  \\
		& \leq   \frac{ 1 }{e^{ \frac{\theta \eps }{K}  } }\ \clO\left(   N e^{\theta\phi^{2}} + e^{ \frac{\theta}{N^{3/5}}} +  e^{\theta   - \frac{3}{5} N^{1/5} \ln(N)  } + e^{\theta +\ln(N)   -N^{1/5} \ln(1/\phi)  } \right)
\\ & \leq  \mathcal{O}\left( N e^{ - \frac{3 \eps N^{\frac{1}{6}}}{4K}} \right).
	\end{align*} 
And finally, with the union bound,
		\begin{align*}
			\Pr\left(\clE   \right) & \leq \Pr\left(\exists  ~ U, \ket{\theta} \in \mathcal{U}_{\mathsf{Adv}} \times  \mathcal{M} \text{ such that } X \geq \eps \right)\\
			& \leq \sum_{ U \in \mathcal{U}_{\mathsf{Adv}}}\sum_{ \ket{\theta} \in \mathcal{M}} \Pr\left(  X \geq \eps  \right)  \\
			& \leq \sum_{ U \in \mathcal{U}_{\mathsf{Adv}}}\sum_{ \ket{\theta} \in \mathcal{M}}  \Pr\left(  \sum_{m=1}^K  X_m \geq \eps  \right)  \\
			& \leq \sum_{ U \in \mathcal{U}_{\mathsf{Adv}}} \sum_{ \ket{\theta} \in \mathcal{M}}  \sum_{m=1}^K \Pr\left(  X_m \geq  \frac{\eps}{K}  \right)  \\
			& \leq \vert \mathcal{U}_{\mathsf{Adv}} \vert \vert \mathcal{M} \vert  K \Pr\left(  X_m \geq  \frac{\eps}{K}  \right) \\
   & \leq 2^{N^{\alpha}}  2^{N^{\alpha}} K \mathcal{O}\left( N e^{ - \frac{3 \eps N^{\frac{1}{6}}}{4K}} \right) \\ 
			& \leq \mathcal{O} \left(\dfrac{K N}{ 2^{N^{\alpha}}}\right). 
   \qedhere
		\end{align*}	
	\end{proof}

\section{Conclusion and future work}\label{conclusion}	
	Our main result exhibits the existence of quantum tamper detection codes for large families of unitary operators of size upto $2^{2^{\alpha n}}$. Since the proof is probabilistic, one natural direction would be to give a constructive proof for quantum tamper detection codes.
 However, it should be noted that such efficient constructions are not known even against a classical adversary of such a large size.
 Typically, efficient constructions are known for families of size $2^{\poly(n)}$ in the CRS model.
 Hence, one has to first find out families  of relatively small size (and of some interest) against which tamper detection can be made efficient.
 We present one such example, the family of generalized Pauli operators.    
 There are other natural follow-up questions:
	\begin{itemize}
		\item 
  An arbitrary quantum adversary is capable of doing CPTP operations. Can we provide quantum tamper detection security for families of CPTP maps? 	
  As a first work in this line, we restrict ourselves to unitary tamperings. 
		\item Similar to the classical result of \cite{JW15}, can we obtain an efficient construction of tamper detection codes for an arbitrary family of unitary operators of size $2^{s(n)}$ where $s$ is an arbitrary polynomial in $n$?	
		\item 	Classically tamper detection codes exist for any $\alpha <1$. In the current work, we show the existence of unitary tamper detection codes for $\alpha< \frac{1}{6}$.
  Although we note that with careful optimization of parameters, the same analysis goes through for any $\alpha <\frac{1}{4}$, it will be interesting to see if we can get tamper detection codes for $\alpha \geq \frac{1}{4}$, possibly using some other techniques. 
		\item
  Classical tamper detection codes turned out to be an important component in the construction of classical non-malleable codes.
  Even in the case of unitary tamperings against classical messages, we show that tamper detection can lead to meaningful non-malleable guarantees.    
 It would be interesting to see if a similar approach can be taken for quantum messages as well.
	\end{itemize}



\section*{Acknowledgment}
{We thank Thiago Bergamaschi for the helpful discussions. 
The work of Naresh Goud Boddu was done while he was a PhD student at the Centre for Quantum Technologies (CQT), NUS, Singapore. 
This work is supported by the Prime Minister’s Office, Singapore and the Ministry of Education,
Singapore, under the Research Centres of Excellence program.}

\bibliographystyle{quantum}
\bibliography{References}

\begin{thebibliography}{10}

\bibitem{JW15}
Zahra Jafargholi and Daniel Wichs.
\newblock ``Tamper detection and continuous non-malleable codes''.
\newblock In Yevgeniy Dodis and Jesper~Buus Nielsen, editors, Theory of Cryptography.
\newblock \href{https://dx.doi.org/10.1007/978-3-662-46494-6_19}{Pages 451--480}.
\newblock Berlin, Heidelberg~(2015). Springer Berlin Heidelberg.

\bibitem{CG13}
M.~{Cheraghchi} and V.~{Guruswami}.
\newblock ``Capacity of non-malleable codes''.
\newblock \href{https://dx.doi.org/10.1109/TIT.2015.2511784}{IEEE Transactions on Information Theory {\bf 62}, 1097--1118}~(2016).

\bibitem{FMNV14}
Sebastian Faust, Pratyay Mukherjee, Daniele Venturi, and Daniel Wichs.
\newblock ``Efficient non-malleable codes and key-derivation for poly-size tampering circuits''.
\newblock In Phong~Q. Nguyen and Elisabeth Oswald, editors, Advances in Cryptology -- EUROCRYPT 2014.
\newblock \href{https://dx.doi.org/10.1007/978-3-642-55220-5_7}{Pages 111--128}.
\newblock Berlin, Heidelberg~(2014). Springer Berlin Heidelberg.

\bibitem{CDFPW08}
Ronald Cramer, Yevgeniy Dodis, Serge Fehr, Carles Padr{\'o}, and Daniel Wichs.
\newblock ``Detection of algebraic manipulation with applications to robust secret sharing and fuzzy extractors''.
\newblock In Nigel Smart, editor, Advances in Cryptology -- EUROCRYPT 2008.
\newblock \href{https://dx.doi.org/10.1007/978-3-540-78967-3_27}{Pages 471--488}.
\newblock Berlin, Heidelberg~(2008). Springer Berlin Heidelberg.

\bibitem{CPX14}
Ronald Cramer, Carles Padr{\'o}, and Chaoping Xing.
\newblock ``Optimal algebraic manipulation detection codes in the constant-error model''.
\newblock In Yevgeniy Dodis and Jesper~Buus Nielsen, editors, Theory of Cryptography.
\newblock \href{https://dx.doi.org/10.1007/978-3-662-46494-6_20}{Pages 481--501}.
\newblock Berlin, Heidelberg~(2015). Springer Berlin Heidelberg.

\bibitem{Shor_coding}
Peter~W Shor.
\newblock ``Scheme for reducing decoherence in quantum computer memory''.
\newblock \href{https://dx.doi.org/10.1103/PhysRevA.52.R2493}{Physical review A {\bf 52}, R2493}~(1995).

\bibitem{CSS}
A~Robert Calderbank and Peter~W Shor.
\newblock ``Good quantum error-correcting codes exist''.
\newblock \href{https://dx.doi.org/10.1103/PhysRevA.54.1098}{Physical Review A {\bf 54}, 1098}~(1996).

\bibitem{Dan_thesis}
Daniel Gottesman.
\newblock ``Stabilizer codes and quantum error correction''.
\newblock PhD thesis.
\newblock Caltech.
\newblock ~(1997).
\newblock  url:~\url{https://thesis.library.caltech.edu/2900/2/THESIS.pdf}.

\bibitem{Kitaev}
A.Yu. Kitaev.
\newblock ``Fault-tolerant quantum computation by anyons''.
\newblock \href{https://dx.doi.org/10.1016/s0003-4916(02)00018-0}{Annals of Physics {\bf 303}, 2--30}~(2003).

\bibitem{steane}
Andrew~M Steane.
\newblock ``Error correcting codes in quantum theory''.
\newblock \href{https://dx.doi.org/10.1103/PhysRevLett.77.793}{Physical Review Letters {\bf 77}, 793}~(1996).

\bibitem{GC17}
Gorjan Alagic and Christian Majenz.
\newblock ``Quantum non-malleability and authentication''.
\newblock In Jonathan Katz and Hovav Shacham, editors, Advances in Cryptology -- CRYPTO 2017.
\newblock \href{https://dx.doi.org/10.1007/978-3-319-63715-0_11}{Pages 310--341}.
\newblock Cham~(2017). Springer International Publishing.

\bibitem{ABW09}
Andris Ambainis, Jan Bouda, and Andreas Winter.
\newblock ``Nonmalleable encryption of quantum information''.
\newblock \href{https://dx.doi.org/10.1063/1.3094756}{Journal of Mathematical Physics {\bf 50}, 042106}~(2009).

\bibitem{Anne}
A.~Broadbent and S{\'e}bastien Lord.
\newblock ``Uncloneable quantum encryption via random oracles''.
\newblock \href{https://dx.doi.org/10.4230/LIPIcs.TQC.2020.4}{IACR Cryptol. ePrint Arch. {\bf 2019}, 257}~(2019).

\bibitem{Dan_clone}
Daniel Gottesman.
\newblock ``Uncloneable encryption''.
\newblock \href{https://dx.doi.org/10.26421/qic3.6-2}{Quantum Info. Comput. {\bf 3}, 581–602}~(2003).

\bibitem{DPW10}
Stefan Dziembowski, Krzysztof Pietrzak, and Daniel Wichs.
\newblock ``Non-malleable codes''.
\newblock \href{https://dx.doi.org/10.1145/3178432}{J. ACM{\bf 65}}~(2018).

\bibitem{MDRHX11}
Mihir Bellare, David Cash, and Rachel Miller.
\newblock ``Cryptography secure against related-key attacks and tampering''.
\newblock In Dong~Hoon Lee and Xiaoyun Wang, editors, Advances in Cryptology -- ASIACRYPT 2011.
\newblock \href{https://dx.doi.org/}{Pages 486--503}.
\newblock Berlin, Heidelberg~(2011). Springer Berlin Heidelberg.

\bibitem{MDT10}
Mihir Bellare and David Cash.
\newblock ``Pseudorandom functions and permutations provably secure against related-key attacks''.
\newblock In Tal Rabin, editor, Advances in Cryptology -- CRYPTO 2010.
\newblock \href{https://dx.doi.org/10.1007/978-3-642-14623-7_36}{Pages 666--684}.
\newblock Berlin, Heidelberg~(2010). Springer Berlin Heidelberg.

\bibitem{MTE03}
Mihir Bellare and Tadayoshi Kohno.
\newblock ``A theoretical treatment of related-key attacks: Rka-prps, rka-prfs, and applications''.
\newblock In Eli Biham, editor, Advances in Cryptology --- EUROCRYPT 2003.
\newblock \href{https://dx.doi.org/10.1007/3-540-39200-9_31}{Pages 491--506}.
\newblock Berlin, Heidelberg~(2003). Springer Berlin Heidelberg.

\bibitem{MKSXK12}
Mihir Bellare, Kenneth~G. Paterson, and Susan Thomson.
\newblock ``Rka security beyond the linear barrier: Ibe, encryption and signatures''.
\newblock In Xiaoyun Wang and Kazue Sako, editors, Advances in Cryptology -- ASIACRYPT 2012.
\newblock \href{https://dx.doi.org/10.1007/978-3-642-34961-4_21}{Pages 331--348}.
\newblock Berlin, Heidelberg~(2012). Springer Berlin Heidelberg.

\bibitem{SKD11}
Sebastian Faust, Krzysztof Pietrzak, and Daniele Venturi.
\newblock ``Tamper-proof circuits: How to trade leakage for tamper-resilience''.
\newblock In Luca Aceto, Monika Henzinger, and Ji{\v{r}}{\'i} Sgall, editors, Automata, Languages and Programming.
\newblock \href{https://dx.doi.org/10.1007/978-3-642-22006-7_33}{Pages 391--402}.
\newblock Berlin, Heidelberg~(2011). Springer Berlin Heidelberg.

\bibitem{RATST04}
Rosario Gennaro, Anna Lysyanskaya, Tal Malkin, Silvio Micali, and Tal Rabin.
\newblock ``Algorithmic tamper-proof (atp) security: Theoretical foundations for security against hardware tampering''.
\newblock In Moni Naor, editor, Theory of Cryptography.
\newblock \href{https://dx.doi.org/10.1007/978-3-540-24638-1_15}{Pages 258--277}.
\newblock Berlin, Heidelberg~(2004). Springer Berlin Heidelberg.

\bibitem{VAVY11}
Vipul Goyal, Adam O'Neill, and Vanishree Rao.
\newblock ``Correlated-input secure hash functions''.
\newblock In Yuval Ishai, editor, Theory of Cryptography.
\newblock \href{https://dx.doi.org/10.1007/978-3-642-19571-6_12}{Pages 182--200}.
\newblock Berlin, Heidelberg~(2011). Springer Berlin Heidelberg.

\bibitem{YMAD06}
Yuval Ishai, Manoj Prabhakaran, Amit Sahai, and David Wagner.
\newblock ``Private circuits ii: Keeping secrets in tamperable circuits''.
\newblock In Serge Vaudenay, editor, Advances in Cryptology - EUROCRYPT 2006.
\newblock \href{https://dx.doi.org/10.1007/11761679_19}{Pages 308--327}.
\newblock Berlin, Heidelberg~(2006). Springer Berlin Heidelberg.

\bibitem{TBA11}
Yael~Tauman Kalai, Bhavana Kanukurthi, and Amit Sahai.
\newblock ``Cryptography with tamperable and leaky memory''.
\newblock In Phillip Rogaway, editor, Advances in Cryptology -- CRYPTO 2011.
\newblock \href{https://dx.doi.org/10.1007/978-3-642-22792-9_21}{Pages 373--390}.
\newblock Berlin, Heidelberg~(2011). Springer Berlin Heidelberg.

\bibitem{KR12}
Krzysztof Pietrzak.
\newblock ``Subspace lwe''.
\newblock In Ronald Cramer, editor, Theory of Cryptography.
\newblock \href{https://dx.doi.org/10.1007/978-3-642-28914-9_31}{Pages 548--563}.
\newblock Berlin, Heidelberg~(2012). Springer Berlin Heidelberg.

\bibitem{Ber23}
Thiago {Bergamaschi}.
\newblock ``Pauli manipulation detection codes and applications to quantum communication over adversarial channels''~(2023).
\newblock Available at \url{https://arxiv.org/abs/2304.06269}.

\bibitem{ABJ22}
Divesh Aggarwal, Naresh~Goud Boddu, and Rahul Jain.
\newblock ``Quantum secure non-malleable codes in the split-state model''.
\newblock \href{https://dx.doi.org/10.1109/TIT.2023.3328839}{IEEE Transactions on Information Theory}~(2023).

\bibitem{V10}
Roman {Vershynin}.
\newblock ``{Introduction to the non-asymptotic analysis of random matrices}''~(2010).
\newblock  \href{http://arxiv.org/abs/1011.3027}{arXiv:1011.3027}.

\bibitem{Gu13}
Yinzheng Gu.
\newblock ``Moments of random matrices and weingarten function\href{https://qspace.library.queensu.ca/server/api/core/bitstreams/cee37ba4-2035-48e0-ac08-2974e082a0a9/content}''~(2013).

\bibitem{Wg78}
Don {Weingarten}.
\newblock ``{Asymptotic behavior of group integrals in the limit of infinite rank}''.
\newblock \href{https://dx.doi.org/10.1063/1.523807}{Journal of Mathematical Physics {\bf 19}, 999--1001}~(1978).

\bibitem{C03}
Benoît Collins.
\newblock ``{Moments and cumulants of polynomial random variables on unitarygroups, the Itzykson-Zuber integral, and free probability}''.
\newblock \href{https://dx.doi.org/10.1155/S107379280320917X}{International Mathematics Research Notices {\bf 2003}, 953--982}~(2003).

\bibitem{CB06}
Beno{\^\i}t {Collins} and Piotr {{\'S}niady}.
\newblock ``{Integration with Respect to the Haar Measure on Unitary, Orthogonal and Symplectic Group}''.
\newblock \href{https://dx.doi.org/10.1007/s00220-006-1554-3}{Communications in Mathematical Physics {\bf 264}, 773--795}~(2006).
\newblock  \href{http://arxiv.org/abs/math-ph/0402073}{arXiv:math-ph/0402073}.

\bibitem{BGJR23}
Naresh~Goud Boddu, Vipul Goyal, Rahul Jain, and João Ribeiro.
\newblock ``Split-state non-malleable codes and secret sharing schemes for quantum messages''~(2023).
\newblock  \href{http://arxiv.org/abs/2308.06466}{arXiv:2308.06466}.

\end{thebibliography}

\newpage

\begin{appendix}

\section{Quantum AMD codes}\label{qamd}
	Let $\mathbb{F}_q$ be the field of size $q$ with characteristic $p$.
	Let $d$ be an integer such that $p$ does not divide $d + 2$. Consider the following function $f : \mathbb{F}_q^d \times \mathbb{F}_q \to \mathbb{F}_q$ defined by \[f(s_1,s_2,\ldots, s_d,r) = \sum_{i=1}^{d}s_i r^i + r^{d+2}.\]	
	We consider an encoding and decoding strategy analogous to classical encoding~\cite{CDFPW08}. 
	The analysis and proof also follow similar lines and are fairly straightforward.
 Here we present the same for the sake of completeness.
	For compactness, we will use $s$ to denote $(s_1,s_2,\ldots,s_d)\in \mathbb{F}_q^d$.
	We will also use $v_{i:j}$ to denote the restriction of the vector $v$ to coordinates from $i$  through $j$.
That is, for a vector $v= \left( v_1,v_2, \ldots, v_n  \right)$, the restriction $v_{i:j} = (v_i, v_{i+1}, \ldots, v_{j})$. 	
	\begin{itemize}
		\item Let $\enc$ be a quantum encoding defined as below:
		\[\enc :V \ket{(s_1, s_2,\ldots, s_d)} \to \ket{\psi_{s}} = \frac{1}{\sqrt{q}} \sum_{r \in [q]}\ket{s, r, f(s, r)}.\]
				\item Let $\dec$ be the POVM $\lbrace \Pi_\perp, \Pi_{s \in \mathbb{F}_q^d} \rbrace$ such that
		\[{\Pi}_{s} = \ket{\psi_{s}} \bra{\psi_{s}} \text{ and } {\Pi}_\perp = \Id - \sum_{s \in \mathbb{F}_q^d} {\Pi}_{s} \,.\] 
		\end{itemize} 
			\begin{claim} \label{claim:AMD_inequality}
		$\left\vert \sum\limits_{r \in \mathbb{F}_q}  \bra{f((s + x_{1:d}),r+x_{d+1})}\ket{f(s, r)+x_{d+2}}\right\vert^2 \leq (d+1)^2$\,.
		\end{claim}
		\begin{proof}
		Note that the following equation
		\[ \sum_{i=1}^{d}(s_i+x_i)(r+x_{d+1})^i + (r+x_{d+1})^{d+2}= \sum_{i=1}^{d}s_i r^i + r^{d+2}+x_{d+2} \]	
		gives a $d+1$ degree polynomial in $r$.
		Hence, for at most $d+1$ values of $r$, we can get $f((s + x_{1:d}),r+x_{d+1}) = f(s, r)+x_{[d+2]}$.
		The desired inequality now follows.
	\end{proof}
	
	\begin{theorem}
		The above  $(\enc,\dec)$ construction is quantum tamper secure (in the relaxed form) against generalized Pauli matrices with parameters  $\left(d\log q , (d+2)\log q, \left( \frac{d+1}{q} \right)^2 \right)$. 
	\end{theorem}

	\begin{proof}
		Let the error term due to generalized Pauli unitary $X$ be $x = (x_1, x_2, \ldots, x_{d+2})$ to indicate the tampering by 
		
		\[ X^x= X^{x_1} \otimes X^{x_2} \otimes \cdots \otimes X^{x_{d+2}} .\] Similarly let the error term due to generalized Pauli unitary $Z$ be $z = (z_1, z_2, \cdots, z_{d+2})$ to indicate the tampering by 
		
		\[ Z^z= Z^{z_1} \otimes Z^{z_2} \otimes \cdots \otimes Z^{z_{d+2}} .\]
		
		For any message $s = (s_1, s_2, \ldots, s_d)$, the state of the message after encoding and the tampering operation is 
		
		\[ X^xZ^z \ket{\psi_{s}} = \frac{1}{\sqrt{q}} \sum_{r \in [q]}  \omega^{\langle z_{1:d}, s \rangle + z_{d+1}r + z_{d+2}f(r,s)}\ket{(s_1+x_1, \ldots, s_d+x_d), r+x_{d+1}, f(s_1, \ldots,s_d, r)+x_{d+2}}. \] 
		
		For any other message $s' = (s'_1, s'_2, \ldots, s'_d) \ne s$, the probability of outputting $s'$ when the encoded message  $s$ is tampered by $X^xZ^z$  is given by the probability $|\bra{\psi_{s'}}X^xZ^z \ket{\psi_{s}}|^2.$ 
		Thus, the probability of outputting a different message can be bounded as follows: 
		
		\begin{flalign*}
			&\ \sum_{s' \ne s} \vert \bra{\psi_{s'}}X^xZ^z \ket{\psi_{s}}|^2 \\
			 & =  \sum_{s' \neq s}  \left\vert \frac{1}{q} \sum_{r,r' \in [q]}  \omega^{<z_{1:d}, s> + z_{d+1}r + z_{d+2}f(r,s)}  \left\langle{s',r',f(s',r')}\vert{s + x_{1:d}, r+x_{d+1}, f(s, r)+x_{d+2}}\right\rangle\right\vert^2  \\
			& = \left\vert \frac{1}{q} \sum_{r \in [q]}  \omega^{<z_{1:d}, s> + z_{d+1}r + z_{d+2}f(r,s)} \bra{f((s + x_{1:d}),r+x_{d+1})}\ket{f(s, r)+x_{d+2}}\right\vert^2    \\
			& \leq   \left\vert  \frac{1}{q} \sum_{r \in [q]}  \bra{f((s + x_{1:d}),r+x_{d+1})}\ket{f(s, r)+x_{d+2}}\right\vert^2  \\
			& \leq \left(\frac{d+1}{q} \right)^2 \hspace{10cm}  (\mbox{from Claim~\ref{claim:AMD_inequality}}). \nonumber \qedhere
		\end{flalign*}
  \end{proof}		

\newpage 

\section{Higher moments for classical messages \label{sec:app_classical}}
	\begin{changemargin}{-1.123cm}{-1.123cm}
	Similar to the case of the first-order moments, we start expressing $X_{js}$ as a sum of products. 
We then deal with both the cases $j=s$ and $j \neq s$ individually. 
	    \vspace{0.4cm}	
	
\noindent		\textbf{Higher moments of random variable $X_{js}$ and $X_{ss}$:}\label{highmoments} 
	\begin{flalign*}
		& X^t_{js}  = \vert\bra{\psi_j}U\ket{\psi_s}\vert^{2t}\\
		& = \left(\bra{\psi_j}U\ket{\psi_s}\bra{\psi_s}U^\dagger \ket{\psi_j}\right)^t \\
		& = \left(\sum_{l_1,k_1}U_{l_1k_1} V^*_{l_1j}V_{k_1s}\right) \left(\sum_{l_2 ,k_2}U^\dagger_{l_2k_2} V^*_{l_2s}V_{k_2j} \right) \cdots \left(\sum_{l_{2t-1},k_{2t-1}}U_{l_{2t-1}k_{2t-1}} V^*_{l_{2t-1}j}V_{k_{2t-1}s}\right) \left(\sum_{l_{2t} ,k_{2t}}U^\dagger_{l_{2t}k_{2t}} V^*_{l_{2t}s}V_{k_{2t}j} \right)  \\
		& = \sum_{l_1,k_1}\sum_{l_2,k_2}\cdots\sum_{l_{2t-1},k_{2t-1}}\sum_{l_{2t},k_{2t}}\left(U_{l_1k_1} U^\dagger_{l_2k_2}\ldots U_{l_{2t-1}k_{2t-1}} U^\dagger_{l_{2t}k_{2t}}  V_{k_1s}V_{k_2j} \ldots V_{k_{2t-1}s}V_{k_{2t}j}V^*_{l_1j}V^*_{l_2s}\ldots V^*_{l_{2t-1}j}V^*_{l_{2t}s}\right). 
	\end{flalign*}
	Before going ahead, we would like to introduce some shorthand and notation, given the number of terms involved in expressions to come.
	\begin{definition} \label{def:Upower_Ci}
	For a unitary operator, let $U^{c_i}$ be defined as follows:
	
		 \hspace{2cm} $U^{c_i}=U$ if $c_i$ is odd and
		 
		 \hspace{2cm} $U^{c_i}=U^\dagger$ if $c_i$ is even.
	\end{definition}	 
		 For definitions of $C(\alpha), C_1(\alpha), \Sigma_i$ and $\sfV(\alpha)$ see Section~{\ref{permu_group}}. 	
   See Section~\ref{wuc} for the definition of $\delta$ as well as other notations regarding Weingarten functions.	

   \vspace{1cm}
   
\noindent	\textbf{A.} When $j \ne s$,
	\begin{flalign}
		&\E[X^t_{js}] = \E[\vert\bra{\psi_j}U\ket{\psi_s}\vert^{2t}] \nonumber \\  & =  \sum_{l_1,k_1}\cdots \sum_{l_{2t},k_{2t}}\left(U_{l_1k_1} U^\dagger_{l_2k_2}\ldots U_{l_{2t-1}k_{2t-1}} U^\dagger_{l_{2t}k_{2t}} \E \left[ V_{k_1s}V_{k_2j} \ldots V_{k_{2t-1}s}V_{k_{2t}j}V^*_{l_1j}V^*_{l_2s} \ldots V^*_{l_{2t-1}j}V^*_{l_{2t}s}\right]\right)  \nonumber \\ 
		& =  \sum_{l_1,k_1} \cdots \sum_{l_{2t},k_{2t}}U_{l_1k_1} U^\dagger_{l_2k_2} \ldots U_{l_{2t-1}k_{2t-1}} U^\dagger_{l_{2t}k_{2t}}  \left(\sum_{\alpha, \beta \in S_{2t}} \delta_{\alpha}(k_1 \ldots k_{2t},l_1 \ldots l_{2t})\delta_{\beta}(sj \ldots sj,js \ldots js) \wg(\beta \alpha^{-1},N)\right) \nonumber \\ 
		& =   \sum_{\alpha \in S_{2t}} \left[\left(\sum_{k_1=l_{\alpha(1)}}\cdots \sum_{k_{2t}=l_{\alpha(2t)}}U_{l_1k_1} U^\dagger_{l_2k_2}\ldots U_{l_{2t-1}k_{2t-1}} U^\dagger_{l_{2t}k_{2t}} \right) \left(\sum_{ \beta \in S_{2t}}\delta_{\beta}(sj\ldots sj,js\ldots js) \wg(\beta \alpha^{-1},N)\right) \right] \nonumber \\
	\nonumber & =   \sum_{\alpha \in S_{2t}} \left[\left( \prod_{(c_1\ c_2\ \ldots\  c_e)\ \in\ C(\alpha)} \Tr(U^{c_1}U^{c_2}\cdots U^{c_e})\right) \left(\sum_{ \beta \in S_{2t}}\delta_{\beta}(sj\ldots sj,js \ldots js)\wg(\beta \alpha^{-1}, N)\right) \right]   \\ & \omit\hfill (\mbox{from Definition~\ref{def:Upower_Ci}}) \nonumber \\ 
		& =   \sum_{\alpha \in S_{2t}} \sum_{ \beta \in S_{2t}} \left[\left( \prod_{(c_1\ c_2\ \ldots\  c_e)\ \in\ C(\alpha)} \Tr(U^{c_1}U^{c_2}\ldots U^{c_e})\right) \left(\delta_{\beta}(sj\ldots sj,js \ldots js) \wg(\beta \alpha^{-1},N)\right) \right] \nonumber \\ 
		& \leq   \sum_{\alpha \in S_{2t}} \sum_{ \beta \in S_{2t}} \left[\left( \prod_{(c_1\ c_2\ \ldots\  c_e)\ \in\ C(\alpha)}  \vert \Tr(U^{\sfV(c_1 \ c_2 \ldots c_e)})   \vert \right) \left(\delta_{\beta}(sj\ldots sj,js \ldots js) \vert \wg(\beta \alpha^{-1},N)\vert\right) \right] \nonumber \\ 
		\nonumber & \leq   \sum_{\alpha \in S_{2t}} \sum_{ \beta \in S_{2t}} \left[\left( \prod_{(c_1\ c_2\ \ldots\  c_e)\ \in\ C(\alpha)}  N \right)  \left(\delta_{\beta}(sj\ldots sj,js \ldots js) \vert \wg(\beta \alpha^{-1},N)\vert\right) \right] \hspace{1.6cm}   (\mbox{from Fact~\ref{fact:unitary_trace_leq_n}}) \\
		\nonumber& \leq  \sum_{\alpha \in S_{2t}} \sum_{ \beta \in S_{2t}} \left[ N^{\vert C(\alpha) \vert} \clO\left(\delta_{\beta}(sj\ldots sj,js \ldots js)  \frac{N^{\vert C(\beta \alpha^{-1})\vert} }{N^{4t}}\right) \right] \hspace{1.6cm} \hspace{2.8cm}   (\mbox{from eq.~\eqref{wg_asym}})\\  
		& =  \sum_{\alpha \in S_{2t}} \sum_{ \beta \in S_{2t}} \left[ \clO \left(\delta_{\beta}(sj\ldots sj,js \ldots js)  \frac{N^{ \vert  C(\alpha) \vert+\vert C(\beta \alpha^{-1})\vert} }{N^{4t}}\right) \right] \nonumber \\ 
		& =  \sum_{\alpha \in S_{2t}} \sum_{ \beta \in \mathcal{B}_{2t}} \clO \left[ \frac{N^{ \vert C(\alpha) \vert+\vert C(\beta \alpha^{-1})\vert} }{N^{4t}} \right] \nonumber \\ 
		\nonumber & \leq  \sum_{\alpha \in S_{2t}} \sum_{ \beta \in \mathcal{B}_{2t}} \left[ \clO \left(\frac{1}{N^t} \right)  \right] \hspace{7.6cm} \tab (\mbox{from Corollary~\ref{upper_bound_on_C}}) \\ 
		& \leq  (2t)!  (t)! \left[ \clO \left(\frac{1}{N^t} \right)  \right] \nonumber \\ 
		& \leq   \clO \left(\frac{t^4}{N}\right)^t \hspace{10.6cm} \tab (\mbox{from Fact~\ref{ub_on_factorial}})\,. \nonumber
	\end{flalign} 

 \vspace{1cm}
 
\noindent		\textbf{B.} When $j = s$,
	\begin{flalign}
		&\E\left[X^t_{ss}\right]= \E\left[ \left\vert\bra{\psi_s}U\ket{\psi_s}\right\vert^{2t}\right] \nonumber\\ 
		& =  \sum_{l_1,k_1}\cdots \sum_{l_{2t},k_{2t}}\left(U_{l_1k_1} U^\dagger_{l_2k_2}\ldots U_{l_{2t-1}k_{2t-1}} U^\dagger_{l_{2t}k_{2t}} \E \left[ V_{k_1s}V_{k_2s} \ldots V_{k_{2t-1}s}V_{k_{2t}s}V^*_{l_1s}V^*_{l_2s} \ldots V^*_{l_{2t-1}s}V^*_{l_{2t}s}\right]\right)  \nonumber \\ 
		& =  \sum_{l_1,k_1} \cdots \sum_{l_{2t},k_{2t}}U_{l_1k_1} U^\dagger_{l_2k_2} \ldots U_{l_{2t-1}k_{2t-1}} U^\dagger_{l_{2t}k_{2t}}  \left(\sum_{\alpha, \beta \in S_{2t}} \delta_{\alpha}(k_1 \ldots k_{2t},l_1 \ldots l_{2t})\delta_{\beta}(ss \ldots ss,ss \ldots ss) \wg(\beta \alpha^{-1},N)\right) \nonumber \\ 
		& =   \sum_{\alpha \in S_{2t}} \left[\left(\sum_{k_1=l_{\alpha(1)}}\cdots \sum_{k_{2t}=l_{\alpha(2t)}}U_{l_1k_1} U^\dagger_{l_2k_2}\ldots U_{l_{2t-1}k_{2t-1}} U^\dagger_{l_{2t}k_{2t}} \right) \left(\sum_{ \beta \in S_{2t}}\delta_{\beta}(ss\ldots ss,ss\ldots ss) \wg(\beta \alpha^{-1},N)\right) \right] \nonumber \\
		\nonumber & =   \sum_{\alpha \in S_{2t}} \left[\left( \prod_{(c_1\ c_2 \ldots  c_e) \in C(\alpha)} \Tr(U^{c_1}U^{c_2}\cdots U^{c_e})\right) \left( \frac{1}{N(N+1) \cdots (N+2t-1)}\right) \right] \hspace{1.23cm}(\mbox{from Definition~\ref{def:Upower_Ci}}) \\ 
		& \leq   \sum_{\alpha \in S_{2t}} \left[\left( \prod_{c \in C(\alpha)} \vert \Tr(U^{\sfV(c)}) \vert \right) \left( \frac{1}{N(N+1) \cdots (N+2t-1)}\right) \right] \nonumber \\ 
		&=   \sum_{\alpha \in S_{2t}} \left[\left( \prod_{c \in C_1(\alpha)} \vert \Tr(U) \vert  \prod_{c \in C(\alpha) \setminus C_1(\alpha)} \vert \Tr(U^{\sfV(c)}) \vert \right) \left( \frac{1}{N(N+1) \cdots (N+2t-1)}\right) \right]  \nonumber \\ 
	& \leq   \sum_{\alpha \in S_{2t}} \left[\left( (\phi N)^{\vert C_1(\alpha) \vert}  N^{\vert C(\alpha) \vert-\vert C_1(\alpha) \vert} \right) \left( \frac{1}{N(N+1) \cdots (N+2t-1)}\right) \right] \hspace{3.8cm}(\mbox{from Fact~\ref{fact:unitary_trace_leq_n}})  \nonumber  \\ 
		\nonumber& \leq   \sum_{\alpha \in S_{2t}} \left[\left( \phi^{\vert \fix(\alpha) \vert}  N^{\vert C(\alpha) } \right) \left( \frac{1}{(2t)! {\binom{N+2t-1}{2t}}}\right) \right] \hspace{6cm} ~~ (\mbox{since $\mathsf{Fix}(\alpha) \subseteq C_1(\alpha)$})   \\ 	
		& =   \left( \frac{1}{(2t)! {\binom{N+2t-1}{2t}}}\right) \sum_{i=1}^{2t} \left[  \sum_{ \alpha \in S_{2t} : C(\alpha)=i} \left( \phi^{\vert \fix(\alpha) \vert}  N^{\vert C(\alpha) \vert} \right)  \right]  \nonumber \\ 	
		\nonumber& \leq   \left( \frac{1}{(2t)! {\binom{N+2t-1}{2t}}}\right) \left( \sum_{i=1}^{t-1} \left[  \sum_{\alpha \in S_{2t} : C(\alpha)=i}  N^{i}  \right] +\sum_{i=t}^{2t} \left[  \sum_{\alpha \in S_{2t} : C(\alpha)=i} \left( \phi^{ 2i-2t}  N^{ i } \right)  \right]\right)  ~ \hspace{1.236cm} (\mbox{from Lemma~\ref{lemma_fix}})  \\ 	
		& =   \left( \frac{1}{(2t)! {\binom{N+2t-1}{2t}}}\right) \left( \sum_{i=1}^{t-1} \left[  \vert \Sigma_{2t-i} \vert  N^{i}  \right] +\sum_{i=t}^{2t} \left[   \vert \Sigma_{2t-i} \vert \left( \phi^{ 2i-2t}  N^{ i } \right)  \right]\right)  \nonumber \\ 	
		\nonumber& \leq   \left( \frac{1}{(2t)! {\binom{N+2t-1}{2t}}}\right) \left( \sum_{i=1}^{t-1} \left[ {\binom{2t}{2}}^{2t-i}  N^{i}  \right] +\sum_{i=t}^{2t} \left[  {\binom{2t}{2}}^{2t-i}\left( \phi^{ 2i-2t}  N^{ i } \right)  \right]\right)  ~ ~ \hspace{1.23cm} (\mbox{from Observation~\ref{upper_bound_on_Sigma}})\\ 	
		& =   \left( \frac{{\binom{2t}{2}}^{2t}}{(2t)! {\binom{N+2t-1}{2t}}}\right) \left( \sum_{i=1}^{t-1} \left[ \left(\frac{N}{{\binom{2t} {2}}}\right)^i \right] + \frac{1}{\phi^{2t}} \sum_{i=t}^{2t} \left[ \left( \frac{ \phi^{ 2}N}{ {\binom{2t} {2}}} \right)^i  \right]\right) \nonumber	
	\\ 
		\nonumber& \leq   \left( \frac{ e^{2t-1}(e^2t^2)^{2t}(2t)^{2t}}{(2t)^{2t} (N+2t-1)^{2t} }\right) \left( \sum_{i=1}^{t-1} \left[ \left(\frac{N}{t^2}\right)^i \right] + \frac{1}{\phi^{2t}} \sum_{i=t}^{2t} \left[ \left( \frac{ \phi^{ 2}N}{ t^2} \right)^i  \right]\right) ~ \hspace{3.8cm}(\mbox{from Fact~\ref{ub_on_factorial}})\\
		& =   \frac{1}{e}\left( \frac{ e^3t^2}{N+2t-1}\right)^{2t} \left( \sum_{i=1}^{t-1} \left[ \left(\frac{N}{t^2}\right)^i \right] + \frac{1}{\phi^{2t}} \sum_{i=t}^{2t} \left[ \left( \frac{ \phi^{ 2}N}{ t^2} \right)^i  \right]\right) \nonumber \\ 	
		\nonumber& \leq   \frac{1}{e}\left( \frac{ e^3t^2}{N+2t-1}\right)^{2t} \left( 2 \left[ \left(\frac{\sqrt{N}}{t}\right)^{2t} \right] + \frac{t}{\phi^{2t}}  \left[ \left( \frac{ \phi^{ 2}N}{ t^2} \right)^{2t} \right]\right) ~ \hspace{4.6cm}(\mbox{since $4t^2 \leq N$})\\ 	
		& \leq \clO \left(    \left( \frac{  t \sqrt{N}}{N+2t-1}\right)^{2t}  + \frac{t}{\phi^{2t}}  \left( \frac{ \phi^2 N}{N+2t-1}\right)^{2t}  \right)  \nonumber \\ 	
		\nonumber& \leq \clO \left(    \left( \frac{  t^2}{N}\right)^{t}  + t \phi^{ 2t}  \right).
	\end{flalign}

	\end{changemargin} 
	
	\newpage
 
    \section{Higher moments for quantum messages \label{sec:app_quantum}}
    
	\begin{changemargin}{-1.123cm}{-1.123cm}
	We start by representing $X_m^t$ as a sum of products and then move on to calculating higher moments.
	
	    \vspace{0.4cm}

\noindent	\textbf{Higher moments of random variable $X_{m}$:}\label{highmomentsofxm} 
			\begin{flalign*}
			&X^t_{m} = \left( \sum_{i=1}^K \sum_{j=1}^K a_ia^*_j  \bra{\psi_m} U  \ket{\psi_i} \bra{\psi_j}  U^\dagger \ket{\psi_m} \right)^{t}\\
			 &= \left( \sum_{i=1}^K \sum_{j=1}^K a_ia^*_j  \left(\sum_{l_1,k_1}U_{l_1k_1} V^*_{l_1m}V_{k_1i}\right)\left(\sum_{l_1,k_1}U^\dagger_{l_2k_2} V^*_{l_2j}V_{k_2m}\right) \right)^{t}\\
			 &= \left( \sum_{i_1,j_1, \ldots i_t, j_t=1}^K  a_{i_1} \ldots a_{i_t}a^*_{j_1} \ldots a^*_{j_t} \right. \\
			  & \quad \quad \left. \left( \sum_{l_1,k_1, \ldots l_{2t},k_{2t}} \left(U_{l_1k_1} U^\dagger_{l_2k_2}\ldots U_{l_{2t-1}k_{2t-1}} U^\dagger_{l_{2t}k_{2t}}  V_{k_1i_1}V_{k_2m} \ldots V_{k_{2t-1}i_t}V_{k_{2t}m}V^*_{l_1m}V^*_{l_2j_1}\ldots V^*_{l_{2t-1}m}V^*_{l_{2t}j_t}\right)  \right)\right).
		\end{flalign*}
		Thus, 
		\begin{flalign*} 
		&\E[X^t_{m}] =  \sum_{i_1,j_1=1}^K \ldots \sum_{i_t,j_t=1}^K a_{i_1} \ldots a_{i_t}a^*_{j_1} \ldots a^*_{j_t} 	\left( \sum_{l_1,k_1}\sum_{l_2,k_2}\cdots\sum_{l_{2t-1},k_{2t-1}}\sum_{l_{2t},k_{2t}}  \right.  \\
			&  \left. \quad \quad \quad \left(U_{l_1k_1} U^\dagger_{l_2k_2}\ldots U_{l_{2t-1}k_{2t-1}} U^\dagger_{l_{2t}k_{2t}}  \E \left[ V_{k_1i_1}V_{k_2m} \ldots V_{k_{2t-1}i_t}V_{k_{2t}m}V^*_{l_1m}V^*_{l_2j_1}\ldots V^*_{l_{2t-1}m}V^*_{l_{2t}j_t} \right] \right)  \right) \nonumber \\
			&=  \sum_{i_1,j_1=1}^K \ldots \sum_{i_t,j_t=1}^K a_{i_1} \ldots a_{i_t}a^*_{j_1} \ldots a^*_{j_t}
			\left( \sum_{l_1,k_1}\sum_{l_2,k_2}\cdots\sum_{l_{2t-1},k_{2t-1}}\sum_{l_{2t},k_{2t}}   \right.  \\
			&  \left. \quad \left(U_{l_1k_1} U^\dagger_{l_2k_2}\ldots U_{l_{2t-1}k_{2t-1}} U^\dagger_{l_{2t}k_{2t}}  \left(\sum_{\alpha, \beta \in S_{2t}} \delta_{\alpha}(k_1 \ldots k_{2t},l_1 \ldots l_{2t})\delta_{\beta}(i_1m \ldots i_tm,mj_1 \ldots mj_t) \wg(\beta \alpha^{-1},N)\right) \right)  \right) \nonumber \\
			& =   \sum_{\alpha \in S_{2t}} \left[\left(\sum_{k_1=l_{\alpha(1)}}\cdots \sum_{k_{2t}=l_{\alpha(2t)}}U_{l_1k_1} U^\dagger_{l_2k_2}\ldots U_{l_{2t-1}k_{2t-1}} U^\dagger_{l_{2t}k_{2t}} \right)  \right.  \\
			& \left. \quad \quad \quad   \left(\sum_{ \beta \in S_{2t}} \wg(\beta \alpha^{-1},N) \left[       \sum_{i_1, \cdots i_t,j_1, \cdots j_t=1}^K a_{i_1} \ldots a_{i_t}a^*_{j_1} \ldots a^*_{j_t}    \delta_{\beta}(i_1m\ldots i_tm,mj_1\ldots mj_t)        \right]\right) \right] \nonumber \\ 
			& =   \sum_{\alpha \in S_{2t}} \left[\left(\sum_{k_1=l_{\alpha(1)}}\cdots \sum_{k_{2t}=l_{\alpha(2t)}}U_{l_1k_1} U^\dagger_{l_2k_2}\ldots U_{l_{2t-1}k_{2t-1}} U^\dagger_{l_{2t}k_{2t}} \right) \left(\sum_{ \beta \in S_{2t}} \wg(\beta \alpha^{-1},N)  \vert a_m \vert^{2l(\beta)}  \right) \right] \nonumber \\
		    	\nonumber & =   \sum_{\alpha \in S_{2t}} \left[\left( \prod_{(c_1\ c_2 \ldots  c_e) \in C(\alpha)} \Tr(U^{c_1}U^{c_2}\cdots U^{c_e})\right)  \left(\sum_{ \beta \in S_{2t}} \wg(\beta \alpha^{-1},N)  \vert a_m \vert^{2l(\beta)}  \right) \right]   \\ 
			& \leq   \sum_{\alpha \in S_{2t}} \left[\left( \prod_{c \in C(\alpha)} \vert \Tr(U^{\sfV(c)}) \vert \right) \left(\sum_{ \beta \in S_{2t}} \vert \wg(\beta \alpha^{-1},N) \vert \right) \right] \nonumber \\ 
			&=   \sum_{\alpha \in S_{2t}} \left[\left( \prod_{c \in C_1(\alpha)} \vert \Tr(U) \vert  \prod_{c \in C(\alpha) \setminus C_1(\alpha)} \vert \Tr(U^{\sfV(c)}) \vert \right) \left( \frac{1}{N(N-1) \cdots (N-2t+1)}\right) \right]  \nonumber \\ 
			\nonumber& \leq   \sum_{\alpha \in S_{2t}} \left[\left( (\phi N)^{\vert C_1(\alpha) \vert}  N^{\vert C(\alpha) \vert-\vert C_1(\alpha) \vert} \right) \left( \frac{1}{N(N-1) \cdots (N-2t+1)}\right) \right] \hspace{4cm} (\mbox{from Fact~\ref{fact:unitary_trace_leq_n}})
			\end{flalign*}
			\begin{flalign*}
			& \leq   \sum_{\alpha \in S_{2t}} \left[\left( \phi^{\vert \fix(\alpha) \vert}  N^{\vert C(\alpha) } \right) \left( \frac{1}{(2t)! {N \choose 2t}}\right) \right] & \hspace{2cm} (\mbox{from Lemma~\ref{wg_modsum}})  \\ 	
			& =   \left( \frac{1}{(2t)! {N \choose 2t}}\right) \sum_{i=1}^{2t} \left[  \sum_{ \alpha \in S_{2t} : C(\alpha)=i} \left( \phi^{\vert \fix(\alpha) \vert}  N^{\vert C(\alpha) \vert} \right)  \right]  \nonumber \\ 	
			& \leq   \left( \frac{1}{(2t)! {N \choose 2t}}\right) \left( \sum_{i=1}^{t-1} \left[  \sum_{\alpha \in S_{2t} : C(\alpha)=i}  N^{i}  \right] +\sum_{i=t}^{2t} \left[  \sum_{\alpha \in S_{2t} : C(\alpha)=i} \left( \phi^{ 2i-2t}  N^{ i } \right)  \right]\right) & (\mbox{from Corollary~\ref{upper_bound_on_C}})  \\ 	
			& =   \left( \frac{1}{(2t)! {N \choose 2t}}\right) \left( \sum_{i=1}^{t-1} \left[  \vert \Sigma_{2t-i} \vert  N^{i}  \right] +\sum_{i=t}^{2t} \left[   \vert \Sigma_{2t-i} \vert \left( \phi^{ 2i-2t}  N^{ i } \right)  \right]\right)  \nonumber \\ 	
		& \leq   \left( \frac{1}{(2t)! {N \choose 2t}}\right) \left( \sum_{i=1}^{t-1} \left[ {2t \choose 2}^{2t-i}  N^{i}  \right] +\sum_{i=t}^{2t} \left[  {2t \choose 2}^{2t-i}\left( \phi^{ 2i-2t}  N^{ i } \right)  \right]\right)& (\mbox{from Fact~\ref{ub_on_factorial}})  \\ 
			& =   \left( \frac{{2t \choose 2}^{2t}}{(2t)! {N \choose 2t}}\right) \left( \sum_{i=1}^{t-1} \left[ \left(\frac{N}{{2t \choose 2}}\right)^i \right] + \frac{1}{\phi^{2t}} \sum_{i=t}^{2t} \left[ \left( \frac{ \phi^{ 2}N}{ {2t \choose 2}} \right)^i  \right]\right) \nonumber \\ 	
			& \leq   \left( \frac{ e^{2t-1}(e^2t^2)^{2t}(2t)^{2t}}{(2t)^{2t} (N)^{2t} }\right) \left( \sum_{i=1}^{t-1} \left[ \left(\frac{N}{t^2}\right)^i \right] + \frac{1}{\phi^{2t}} \sum_{i=t}^{2t} \left[ \left( \frac{ \phi^{ 2}N}{ t^2} \right)^i  \right]\right) & (\mbox{from Fact~\ref{ub_on_ncr}})  \\ 	
			& =   \frac{1}{e}\left( \frac{ e^3t^2}{N}\right)^{2t} \left( \sum_{i=1}^{t-1} \left[ \left(\frac{N}{t^2}\right)^i \right] + \frac{1}{\phi^{2t}} \sum_{i=t}^{2t} \left[ \left( \frac{ \phi^{ 2}N}{ t^2} \right)^i  \right]\right) \nonumber \\ 	
		& \leq   \frac{1}{e}\left( \frac{ e^3t^2}{N}\right)^{2t} \left( 2 \left[ \left(\frac{\sqrt{N}}{t}\right)^{2t} \right] + \frac{t}{\phi^{2t}}  \left[ \left( \frac{ \phi^{ 2}N}{ t^2} \right)^{2t} \right]\right) \nonumber \\ 	
			& \leq \clO \left(    \left( \frac{  t \sqrt{N}}{N}\right)^{2t}  + \frac{t}{\phi^{2t}}  \left( \frac{ \phi^2 N}{N}\right)^{2t}  \right)  \nonumber \\ 	
			& \leq \clO \left(    \left( \frac{  t^2}{N}\right)^{t}  + t \phi^{ 2t}  \right).   \nonumber 
		\end{flalign*}

	\end{changemargin}

\end{appendix}

   \end{document}